\newtheorem{theorem}{Theorem}
\newtheorem{corollary}{Corollary}
\newtheorem{proposition}{Proposition}
\newtheorem{lemma}{Lemma}
\theoremstyle{definition}
\newtheorem{policy}{Policy}
\theoremstyle{remark}
\newtheorem*{remark}{Remark}
\newcounter{mytempeqncnt}
\begin{document}
\allowdisplaybreaks

\title{Online Power Control for Block i.i.d.\ Energy Harvesting Channels}

\author{Dor Shaviv and Ayfer \"{O}zg\"{u}r
\thanks{This work was supported in part by a Robert Bosch Stanford Graduate Fellowship, in part by the National Science Foundation under Grant CCF-1618278, and in part by the Center for Science of Information, an NSF Science and Technology Center, under Grant CCF-0939370.
This work was presented in part at the 2017 IEEE Wireless Communications and Networking Conference (WCNC)~\cite{BlockiidBernoulli} and submitted to 2017 IEEE GLOBECOM~\cite{BlockiidGLOBECOM}.
}
\thanks{The authors are with the Department of Electrical Engineering, Stanford University, Stanford, CA 94305 USA (e-mail: shaviv@stanford.edu; aozgur@stanford.edu).}
}

\maketitle

\begin{abstract}
We study the problem of online power control for energy harvesting communication nodes with random energy arrivals and a finite battery. We assume a block i.i.d. stochastic model for the energy arrivals, in which the energy arrivals are constant for a fixed duration $T$, but are independent across different blocks, drawn from an arbitrary distribution. This model serves as a simple approximation to a random process with coherence time $T$. We propose a simple online power control policy, and prove that its performance gap to the optimal throughput is bounded by a constant which is independent of the parameters of the problem. This also yields a simple formula for the approximately optimal long-term average throughput, which sheds some light on the qualitative behavior of the throughput and how it depends on the coherence time of the energy arrival process. Our results show that, perhaps counter-intuitively, for a fixed mean energy arrival rate the throughput decreases with increasing coherence time $T$ of the energy arrival process. In particular, the battery size needed to approach the AWGN capacity of the channel increases linearly with the coherence time of the process. Finally, we show that our results can provide an approximation to the information-theoretic capacity of the same channel.
\end{abstract}

\begin{IEEEkeywords}
Energy harvesting, online power control, channel capacity, finite battery, block i.i.d.
\end{IEEEkeywords}

\IEEEpeerreviewmaketitle

%---------------------------------------------------%
\section{Introduction}
%---------------------------------------------------%

Recent advances in energy harvesting technologies enable wireless devices to harvest the energy they need for communication from the natural resources in their environment. This development opens the exciting possibility to build wireless networks that are self-powered, self-sustainable and which have lifetimes limited by their hardware and not the size of their batteries.

Communication with such wireless devices requires the design of good power control policies that can maximize throughput under random energy availability. 
In particular, available energy should not be consumed too fast, or transmission can be interrupted in the future due to an energy outage; on the other hand, if the energy consumption is too slow, it can result in the wasting of the harvested energy and missed recharging opportunities in the future due to an overflow in the battery capacity.
%The problem of power control for energy harvesting communication 
This problem has received significant interest in the recent literature
\cite{YangUlukus2012,TutuncuogluYener2012,Ozeletal2011,
DP0,DP1,DP3,ho2012optimal,
Mitran,online_infiniteB2,online_infiniteB3,online_infiniteB4,info2013,amirnavaei2015online,
xu2014throughput,Kazerouni2015,
DongOzgur2014,DongFarniaOzgur2015,NearOptimal1,NearOptimal2,
HuseyinWiOpt,HuseyinISIT,
BakninaUlukusISIT2016MAC,BakninaUlukusISIT2016Broadcast,BakninaUlukusJSAC2016,VaranYenerWCNC2017}.
In the offline case, when future energy arrivals are known ahead of time, the problem has an explicit solution \cite{YangUlukus2012,Ozeletal2011,TutuncuogluYener2012}.
The optimal policy keeps energy consumption as constant as possible over time while ensuring no energy wasting due to an overflow in the battery capacity. The more interesting case is the online scenario where future energy arrivals are random and unknown. When the energy arrivals are i.i.d., the problem can be modeled as a Markov Decision Process (MDP) and solved numerically using dynamic programming \cite{DP0,DP1,DP3,ho2012optimal}. However, this numerical approach can be computationally demanding and does not provide insight into the structure of the optimal online power control policy and the qualitative behavior of the resultant throughput, namely how it varies with the parameters of the problem. This kind of insight can be critical for design considerations, such as choosing the size of the battery to employ at the transmitter. More recently in \cite{NearOptimal1}, we  developed a simple online policy, which provably achieves a near-optimal throughput for any distribution of the energy arrivals (see also precursory work in \cite{DongOzgur2014,DongFarniaOzgur2015} and extensions in 
\cite{HuseyinWiOpt,HuseyinISIT,BakninaUlukusISIT2016MAC,BakninaUlukusISIT2016Broadcast,BakninaUlukusJSAC2016,VaranYenerWCNC2017}).
%Due to these limitations of the numerical approach, there has been significant effort in the recent literature to develop simple heuristic online policies. These policies come either with no guarantees or only asymptotic guarantees on optimality \cite{Mitran,online_infiniteB2,online_infiniteB3,online_infiniteB4,info2013,amirnavaei2015online}.
%More recently, in 
%\cite{DongOzgur2014,DongFarniaOzgur2015,NearOptimal1,NearOptimal2}, with extensions in
%\cite{HuseyinWiOpt,HuseyinISIT,BakninaUlukusISIT2016MAC,BakninaUlukusISIT2016Broadcast,BakninaUlukusJSAC2016,VaranYenerWCNC2017},
%develop a simple online policy which provably achieves a near-optimal throughput for any distribution of the energy arrivals. In particular, its 
The gap between the throughput achieved by this scheme and the optimal throughput can be explicitly bounded by a constant independent of the distribution of the energy arrivals and any of the problem parameters.
This leads to a simple approximation for the optimal throughput, which sheds some light on the qualitative behavior of the optimal throughput and its dependence on major problem parameters.

All of the above solutions, including the MDP approach, are applicable only when the energy arrival process is i.i.d., and therefore the next energy arrival at each time instant is impossible to predict.
However, most natural energy harvesting processes, such as solar energy or wind energy, are far from i.i.d.\ and are highly correlated over time. For processes of this type, an i.i.d.\ model is very far from the actual behavior of the process.
The research on optimal online power control for non-i.i.d.\ processes with finite battery size is very scarce.
For example, \cite{online_infiniteB4} proposes a simple policy for general stationary ergodic arrival processes which becomes asymptotically optimal as the battery size tends to infinity, however this strategy can be arbitrarily away from optimality at finite battery size.
%A simple policy is suggested, where the allocated energy is either $\mu+\delta$ or $\mu-\delta$ where $\mu$ is the mean of the energy arrival process and $\delta=\beta\sigma^2\frac{\log\bar{B}}{\bar{B}}$ for some constant $\beta\geq2$ and where $\sigma^2$ is the variance of the energy arrival process. It is shown that this policy is asymptotically optimal if $\bar{B}\to\infty$ and $\sigma^2$ is finite.
For a finite battery, \cite{MaoHassibiarXiv} studies the information-theoretic capacity of a model with a general Markov arrival process, and provides upper and lower bounds on capacity. However, these bounds can be arbitrarily away from optimality, and moreover they do not provide any qualitative understanding of the actual capacity of the system.

In this work, we consider energy arrivals processes which follow a block i.i.d.\ model. This means that the energy arrivals remain constant for a fixed period of time, say $T$ time slots, and then change to an independent realization for the next $T$ time slots.
This can model, for example, a solar panel which harvests energy from the sun, and the appearance of clouds can change randomly and block certain amounts of sunshine for a certain period of time. This process can be approximated by a block i.i.d.\ model.
Additionally, this is a good model for a device which harvests RF energy from other transmitting devices in its environment. Such transmitting devices typically transmit continuously for certain periods of time and are silent for the remaining periods (as in TDMA, for example), which warrants a block i.i.d.\ model. Note that block i.i.d.\ models have been popularly used in wireless communication to capture correlations in the channel fading process by a simple model. In this case, $T$ is called the coherence time of the channel, which corresponds to the time duration over which the channel remains approximately constant \cite{Davidsbook}. Analogously, we refer to $T$ as the coherence time of the energy arrival process in this paper.

We propose a simple policy and establish its near-optimality for this block i.i.d.\ model. This policy combines features of the optimal offline~\cite{TutuncuogluYener2012} and approximately-optimal online~\cite{NearOptimal1} strategies for the i.i.d. ($T=1$) model. Since in the beginning of each block the future energy arrivals are known for a duration of $T$ channel uses, energy allocations for the entire block can be decided on ahead of time, akin to the offline setting. In particular, power allocation inside each block is constant, as implied by the optimal offline strategy, and ensures that energy is not wasted due to an overflow in the battery capacity. On the other hand, the energy arrivals are i.i.d.\ across different blocks and the situation across blocks is akin to the online setting. In particular, between different blocks, the policy resembles the \emph{Fixed Fraction Policy} of~\cite{NearOptimal1}, where a constant fraction of the currently available energy in the battery is allocated to the channel. However, achieving the optimal throughput within a constant gap requires a non-trivial combination of these two schemes. 
%the policy exhibits two different behaviors, depending on whether the energy arrival is ``small'' or ``large'', as determined by a \emph{critical} energy level $E_c$ which will be specified later. For energy arrivals less than $E_c$, the policy resembles the \emph{Fixed Fraction Policy} of~\cite{NearOptimal1}, where a constant fraction of the currently available energy in the battery is allocated to the channel. For energy arrivals larger than $E_c$, the strategy is modified to expend more energy so as not to cause battery overflows and waste energy inside the energy arrival block.

In the same spirit of~\cite{NearOptimal1}, we develop a lower bound to the throughput achieved by our proposed policy by modifying the distribution of the energy arrivals. We do so in a way that, as we show, produces \emph{worse} throughput than the original distribution,\footnote{In~\cite{NearOptimal1} this modified distribution was simply a Bernoulli distribution; here the modification is slightly more involved.} and for which we can analytically evaluate the throughput. We then proceed to developing a nearly-tight upper bound on the optimal throughput achievable under the block i.i.d. model. The throughput achieved with an infinite battery, namely the AWGN capacity $\tfrac{1}{2}\log(1+\mu)$ where $\mu$ is the mean of the energy arrival rate, is always an upper bound on the throughput achievable with any finite battery size. This was the upper bound used in the i.i.d. case in \cite{NearOptimal1}. However, this upper bound turns out to be too loose in general for the block i.i.d. case;
indeed, we show that this upper bound is nearly-achievable (up to a bounded gap) only when the battery size is large enough, specifically when $\bar{B}\geq \mu+T(E_{\max}-\mu)$, where $E_{\max}$ is the maximal energy arrival. 
Note that for fixed $\mu$ and $E_{\max}$, as the coherence time $T$ of the energy arrival process increases, a larger battery is needed to approach the AWGN capacity.
This is somewhat counter-intuitive, since one may expect a large coherence time to increase the optimal throughput, as it results in larger lookahead. We show that when $\bar{B}<\mu+T(E_{\max}-\mu)$, the optimal throughput can be significantly smaller than the AWGN capacity. We finally show that the difference between the throughput achieved by our proposed strategy and our upper bound is bounded by $\tfrac{1}{2}\log e\approx 0.72$, regardless of the values of the problem parameters. 

While in this paper we mostly focus on the online power control problem for energy harvesting nodes, we show that this problem is central to understanding and achieving the information-theoretic capacity of this channel. Following the approach in \cite{ShavivNguyenOzgur2016}, which focused on an i.i.d. model for the energy arrival process, we show that the information-theoretic capacity of the channel can be approximated by the corresponding optimal online throughput also under a block i.i.d. model. The upper bound on the gap between the two performance metrics we develop in this paper depends on the entropy rate of the energy arrival process, which decreases with increasing coherence time $T$. It is also possible to bound the gap between these two performance metrics by the entropy rate of the online power control process (rather than the entropy rate of the energy harvesting process itself.) By modifying the online power control policy to have a constant entropy rate in the lines of \cite{ShavivNguyenOzgur2016}, we believe it is possible to show that the information-theoretic capacity and the optimal online throughput are indeed within a constant gap of each other, independent of the parameters of the problem. 

%--------------------%
\section{System Model}
%--------------------%
\label{sec:system_model}

We begin by introducing notation: Let $\mathbb{E}[\,\cdot\,]$ denote expectation. 
%For $m\leq n$, denote $X_m^n=(X_m,X_{m+1},\ldots,X_{n-1},X_n)$ and $X^n=X_1^n$. 
All logarithms are taken to base 2.
For a process $\{x_t\}_{t=1}^{\infty}$ with a block structure, it will be convenient to have special notation for the $j$-th slot in the $i$-th block:
\begin{align*}
x_j^{(i)}:=x_{(i-1)T+j},
&&j=1,\ldots,T,
&&i=1,2,\ldots
\end{align*}

We consider the discrete-time online power control problem for an energy harvesting transmitter communicating over an additive Gaussian channel. The transmitter is equipped with a battery of finite capacity $\bar{B}$, which is being continuously recharged by an exogenous energy harvesting process.
Let $E_t\in\mathcal{E}$ be the energy harvested at discrete time $t$.
We assume $E_t$ is a \emph{block i.i.d.} stochastic process, with block duration $T$.
More precisely, let $\{E^{(i)}\}_{i=1}^{\infty}$ be an i.i.d. random process, where $E^{(i)}\in\mathcal{E}$ is a nonnegative random variable (RV) drawn from a set $\mathcal{E}$ with marginal distribution $P_E$.
To simplify the analysis, we assume $P_E$ is a discrete distribution and $\mathcal{E}$ is a finite set, however our results hold in more generality for arbitrary discrete or continuous distributions.
%(our results hold for any arbitrary discrete or continuous distribution for $E^{(i)}$, but to simplify the notation, we will assume that $E^{(i)}$ is a discrete RV and $\mathcal{E}$ is a finite set).
Then the process $E_t$ is given by
\begin{align*}
E_t=E^{(i)},
&&t=(i-1)T+1,\ldots,iT,
&&i=1,2,\ldots
\end{align*}
We assume $E^{(i)} > 0$ with positive probability; otherwise $E^{(i)}=0$ w.p. 1 and the problem is degenerate.

A power control \emph{policy} for an energy harvesting system is a sequence of mappings from energy arrivals to a non-negative number, which will denote a level of instantaneous power.
In this work, we will focus on \emph{online policies};
an {online policy} $\mathbf{g}=\{g_t\}_{t=1}^{\infty}$ is a sequence of mappings 
$g_t:\mathcal{E}^t\to\mathbb{R}_+$, $t=1,2,\ldots$, such that the instantaneous power at time $t$ is $g_t(E_1,E_2,\ldots,E_t)$. In words, the power allocation at time $t$ can depend only on the realizations of the energy arrival process up to time $t$ (and not future realizations of the random process), although the probabilistic model for the energy arrival process, i.e. the fact that it is block i.i.d. with coherence time $T$ and distribution $P_E$, is known ahead of time. 
By allocating power $g_t$ at time $t$, we get an instantaneous rate equal to the AWGN capacity, i.e., 
\begin{equation}
C(g_t):=\tfrac{1}{2}\log(1+ g_t).\label{eq:AWGNcap}
\end{equation}

Let $b_t$ be the amount of energy available in the battery at the beginning of time slot $t$. 
An \emph{admissible policy} $\mathbf{g}$ is such that satisfies the following constraints for every possible harvesting sequence $\{E_t\}_{t=1}^{\infty}$:
\begin{align}
&0\leq g_t\leq b_t&&,t=1,2,\ldots,\label{eq:power}\\
&b_t=\min\{b_{t-1}-g_{t-1}+E_t,\bar{B}\}&&,t=2,3,\ldots,\label{eq:battery}
\end{align}
where we assume $b_1=\bar{B}$ without loss of generality.

%Observe from \eqref{eq:battery} that if $E_t\geq\bar{B}$ then $b_t=\bar{B}$, regardless of the actual value of $E_t$. Hence we can assume without loss of generality that $0\leq E_t\leq\bar{B}$.

For a given policy $\mathbf{g}$, we define the $N$-horizon expected total throughput to be
$\mathscr{T}_N(\mathbf{g})=\mathbb{E}[
\sum_{t=1}^{N}C(g_t)]$,
where the expectation is over the energy arrivals $E_1,\ldots,E_N$.
The long-term average throughput of the same policy is defined as
\begin{equation}
\mathscr{T}(\mathbf{g})=\liminf_{N\to\infty}\tfrac{1}{N}\mathscr{T}_N(\mathbf{g}).
\label{eq:def_long_term_throughput}
\end{equation}
Our goal is to characterize the optimal online power control policy and the resultant optimal long-term average throughput:
\begin{equation}
\Theta=
\sup_{\mathbf{g}\text{ admissible}}\mathscr{T}(\mathbf{g}).
\label{eq:def_optimal_throughput}
\end{equation}

%------------------------------%
\section{Preliminary Discussion}
%------------------------------%
\label{sec:preliminary_discussion}

\subsection{Background}
\label{subsec:background}

The optimal offline power control policy has been explicitly characterized in~\cite{YangUlukus2012,Ozeletal2011,TutuncuogluYener2012}, in which the energy arrival sequence $\{E_t\}_{t=1}^{\infty}$ is assumed to be known ahead of time.
Additionally, in~\cite{NearOptimal1} we develop a near-optimal online power control policy for the case of i.i.d. energy arrivals, and provide an approximate expression for the resultant long-term average throughput, with a bounded gap to optimality. In particular,  %Note that without loss of generality we can assume $0\leq E_t\leq \bar{B}$.) 
the Fixed Fraction Policy of \cite{NearOptimal1} allocates a fixed fraction $q$ of the currently available energy at each channel use. 
%In particular, let $\mu=\mathbb{E}[\min(E_t,\bar{B})]$ and $q\triangleq \mu/\bar{B}$. Note that $\mu\in[0,\bar{B}]$ so $q\in[0,1]$. Then
More precisely, let $q\triangleq\frac{\mathbb{E}[\min(E_t,\bar{B})]}{\bar{B}}$. Note that $0\leq q\leq 1$. Then
\[
g_t=q b_t,\qquad t=1,2,\ldots
\]
The main result of \cite{NearOptimal1} is to establish the optimality of this online strategy within a constant additive gap for any i.i.d.\ process ($T=1$).
\begin{theorem}[Theorem 2 in \cite{NearOptimal1}]\label{thm:onlinePC}
Let $E_t$ be an i.i.d.\ non-negative process, and let $\mathbf{g}$ be the Fixed Fraction Policy.
Then the throughput achieved by $\mathbf{g}$ is bounded by
\[
\mathscr{T}(\mathbf{g})\geq C(\mathbb{E}[\min(E_t,\bar{B})])-\frac{1}{2}\log e,
\]
where $C(\cdot)$ is the AWGN capacity defined in \eqref{eq:AWGNcap}.
\end{theorem}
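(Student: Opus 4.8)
The plan is to analyze the stationary behavior of the battery process induced by the Fixed Fraction Policy and to lower-bound the resulting expected instantaneous rate. Writing $\bar{E} := \mathbb{E}[\min(E_t,\bar{B})] = q\bar{B}$, the target is to show that the long-term average of $\tfrac{1}{2}\mathbb{E}[\log(1+qb_t)]$ exceeds $C(\bar{E}) - \tfrac{1}{2}\log e = \tfrac{1}{2}\log(1+\bar{E}) - \tfrac{1}{2}\log e$. Under the policy $g_t=qb_t$, the battery obeys the autonomous recursion $b_{t+1} = \min\{(1-q)b_t + E_{t+1},\bar{B}\}$, so $b_t$ is a Markov chain on $[0,\bar{B}]$ driven by the i.i.d.\ arrivals. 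I would first argue that it admits a unique stationary law $\pi$ and that the $\liminf$ in \eqref{eq:def_long_term_throughput} equals the stationary expectation $\tfrac{1}{2}\mathbb{E}_\pi[\log(1+qb)]$, reducing the whole problem to the single inequality $\mathbb{E}_\pi[\log(1+qb)] \ge \log(1+\bar{E}) - \log e$.

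The key obstacle is that $C(\bar{E})=\tfrac{1}{2}\log(1+q\bar{B})$ is precisely the rate of a permanently full battery $b_t\equiv\bar{B}$, and since $\log$ is concave, comparing $\mathbb{E}[\log(1+qb)]$ to $\log(1+q\mathbb{E}[b])$ via Jensen runs the wrong way; moreover, single-step tangent/secant bounds such as $\log x \ge \log e\,(1-1/x)$ turn out to be too lossy, because the true rate loss is a \emph{dynamic} quantity accumulated over long starvation runs rather than a one-shot penalty. To circumvent this I would identify the worst-case arrival distribution. Fixing $\bar{E}$ (equivalently fixing $q$, so that the policy and the target $C(\bar{E})$ are unchanged), I would replace $P_E$ by the two-point ``Bernoulli'' law $\hat{E}$ equal to $\bar{B}$ with probability $q$ and $0$ otherwise, which preserves $\mathbb{E}[\min(\hat{E},\bar{B})]=q\bar{B}=\bar{E}$, and prove that this modification can only \emph{decrease} the stationary expected rate. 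This monotonicity is the heart of the argument: I would establish it through a convex-ordering (mean-preserving-spread) comparison of the two stationary battery laws, pushing the increased variability of $\hat{E}$ through the capped recursion and invoking concavity of $\log(1+q\cdot)$ to conclude that concentrating the energy into rarer, larger top-ups lengthens the low-battery periods and lowers $\mathbb{E}_\pi[\log(1+qb)]$.

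Finally, for the Bernoulli law the analysis becomes explicit: each arrival resets the battery to $\bar{B}$ (since $(1-q)b_t + \bar{B}\ge\bar{B}$), and during a starvation run of length $G$ the battery decays geometrically as $b=\bar{B}(1-q)^{G}$, where $G$ is geometric with $\mathbb{P}(G=k)=q(1-q)^{k}$; hence $qb=\bar{E}(1-q)^{G}$ and the stationary rate loss is
\[
C(\bar{E}) - \tfrac{1}{2}\mathbb{E}[\log(1+qb)] = \tfrac{1}{2}\sum_{k\ge 0} q(1-q)^{k}\,\log\frac{1+\bar{E}}{1+\bar{E}(1-q)^{k}}.
\]
I would then bound this series uniformly in $q$ and $\bar{E}$ by $\tfrac{1}{2}\log e$, checking that each summand is increasing in $\bar{E}$ so that the supremum is attained as $\bar{E}\to\infty$, where the series collapses to $\tfrac{1-q}{2q}\log\tfrac{1}{1-q}$, whose maximum over $q\in(0,1)$ is $\tfrac{1}{2}\log e$ (approached as $q\to 0$). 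The two delicate steps I anticipate are the worst-case reduction of the second paragraph — making the convex-order coupling through the capped recursion rigorous is the main obstacle — and verifying the uniform closed-form bound on the series; the remaining ingredients (existence of $\pi$ and the geometric computation) are routine.
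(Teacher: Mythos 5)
Your plan is essentially the argument of \cite{NearOptimal1}, which is also what Section~\ref{sec:lower_bound} of this paper reproduces for general $T$ (specialize to $T=1$): replace the arrivals by the Bernoulli law $\hat E\in\{0,\bar B\}$ with $\Pr(\hat E=\bar B)=q$ (which preserves $\mathbb{E}[\min(E_t,\bar B)]$ and hence the policy), exploit the geometric decay $b=\bar B(1-q)^k$ between arrivals, and bound the resulting loss by $\tfrac{1-q}{2q}\log\tfrac{1}{1-q}\le\tfrac12\log e$ exactly as you compute. The only implementation differences are that the paper makes your ``worst-case reduction'' rigorous via a finite-horizon value-function induction that propagates concavity and monotonicity through the capped recursion (cf.\ Proposition~\ref{prop:semi_bernoulli_is_worst} and Lemma~\ref{lemma:bernoulli_is_worst}) rather than a direct convex-order coupling of stationary laws, and invokes the renewal--reward theorem over regenerative cycles instead of your stationary age distribution --- both of which are routes to the same inequalities you wrote down.
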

Note that the AWGN capacity $C(\mathbb{E}[\min(E_t,\bar{B})])$ is an upper bound on the achievable throughput for any distribution (see \cite[Prop. 2]{NearOptimal1}).
Observe that whenever there is an energy arrival larger than the battery size, $E_t>\bar{B}$, the battery will be completely recharged to $\bar{B}$ and the remaining energy is discarded as per \eqref{eq:battery}. Hence, effectively, this is as if an energy arrival of $E_t=\bar{B}$ occurred. We can therefore replace the energy arrival process with $\min(E_t,\bar{B})$, and $\mathbb{E}[\min(E_t,\bar{B})]$ is the mean energy harvested by the transmitter.

There is an alternative way to view the quantity $\mathbb{E}[\min(E_t,\bar{B})]$, which will be useful in the sequel. Observe that whenever the event $\{E_t>\bar{B}\}$ occurs, the memory of the system, which is encapsulated in the state of the battery, is essentially erased. This induces a regenerative structure for the online decision process, and the behaviors of different \emph{epochs}---the periods between consecutive events $\{E_t>\bar{B}\}$---are statistically independent and identical.
Let $p=\Pr(E_t>\bar{B})$, and observe that the average length of an epoch is $\tau=1/p$.
The average energy available for transmission in a single epoch is given by $\varepsilon=\bar{B}+(\tfrac{1}{p}-1)\mathbb{E}[E_t|E_t\leq\bar{B}]$, because the battery is fully charged at the beginning of the epoch, and the average amount of energy harvested in each of the subsequent time slots is $\mathbb{E}[E_t|E_t\leq\bar{B}]$.
Therefore, the average energy per time slot which is available for transmission is given by
\begin{align*}
\frac{\varepsilon}{\tau}
&=p\bar{B}+(1-p)\mathbb{E}[E_t|E_t\leq\bar{B}]=\mathbb{E}[\min(E_t,\bar{B})].
\end{align*}

\subsection{Preliminary Results}
\label{subsec:preliminary_results}

For the block i.i.d.\ model considered in this paper,
it can be observed that the problem can be formulated as a Markov Decision Process (MDP), where each time step of the MDP corresponds to $T$ time slots of the original communication system.
Let $i$ denote the $i$-th step of this MDP. Then we define the state as the pair
$(b_{(i-1)T+1},E_{(i-1)T+1})=(b_1^{(i)},E^{(i)})$.
The action (or control) is the vector of power allocations for the entire block
$(g_{(i-1)T+1},\ldots,g_{iT})=(g_1^{(i)},\ldots,g_T^{(i)})$,
which must satisfy the energy constraints~\eqref{eq:power} and~\eqref{eq:battery}.
%\begin{align*}
%&0\leq g_j^{(i)}\leq b_j^{(i)}, &&j=1,\ldots,T,\\
%&b_j^{(i)}=\min\{b_{j-1}^{(i)}-g_{j-1}^{(i)}+E^{(i)},\bar{B}\} &&j=2,\ldots,T.
%\end{align*}
The disturbance is $E^{(i+1)}$, and the next state pair $(b_1^{(i+1)},E^{(i+1)})$ is given by
\begin{align}
b_1^{(i+1)}&=\min\{b_T^{(i)}-g_T^{(i)}+E^{(i+1)},\bar{B}\},
\end{align}
where the state variable $E^{(i+1)}$ is of course equal to the disturbance itself.
The stage reward is given by
$r_i=\tfrac{1}{T}\sum_{j=1}^{T}C(g_j^{(i)})$, and the goal is to optimize the expected long-term average reward per stage, given by $\liminf_{n\to\infty}\tfrac{1}{N}\sum_{i=1}^{N}\mathbb{E}[r_i]$.

In fact, this MDP can be further simplified.
First, it can be easily seen that, since the energy arrivals for the entire block are known ahead of time, it is suboptimal to have battery overflows inside the block (unless $E^{(i)}\geq\bar{B}$, in which case overflows are inevitable). That is, $b_j^{(i)}=b_{j-1}^{(i)}-g_{j-1}^{(i)}+E^{(i)}$ for $j=2,\ldots,T$.
Otherwise, if there is some $j$ such that $b_{j-1}^{(i)}-g_{j-1}^{(i)}+E^{(i)}>\bar{B}$, one can simply increase $g_{j-1}^{(i)}$, and consequently the reward, without affecting the state.
According to this observation, and by concavity of the logarithm, it follows that it is optimal to set $g_1^{(i)}=g_2^{(i)}=\ldots=g_{T-1}^{(i)}$.
Thus the control is reduced to the pair $(g_1^{(i)},g_T^{(i)})$ 
(note that in general $g_T^{(i)}$ is not equal to $g_T^{(i)}$).
%(note that $g_T^{(i)}$ may not be necessarily equal to $g_1^{(i)}$.)
This is made formal in the following lemma, which is proved in Appendix~\ref{sec:MDP_reduction}.

\begin{lemma}
The MDP defined previously is equivalent to the following MDP, with
state pair $(b_1^{(i)},E^{(i)})$, action pair $(g_1^{(i)},g_T^{(i)})$, and disturbance $E^{(i+1)}$.
The actions must satisfy the constraints 
% \begin{equation}
% E^{(i)}\leq b_T^{(i)}\leq \min\{b_1^{(i)}+(T-1)E^{(i)},\bar{B}\},
% \label{eq:bT_action_space}
% \end{equation}
\begin{equation}
0\leq g_1^{(i)}\leq \min\big(E^{(i)}-\tfrac{E^{(i)}-b_1^{(i)}}{T-1},\bar{B}\big),
\label{eq:g1_action_space}
\end{equation}
\begin{equation}
0\leq g_T^{(i)}\leq b_T^{(i)},
\label{eq:gT_action_space}
\end{equation}
where $b_T^{(i)}=\min\{b_1^{(i)}+(T-1)(E^{(i)}-g_1^{(i)}),\bar{B}\}$.
The state evolves according to the function 
\[\big(b_1^{(i+1)},\, E^{(i+1)}\big)=\big(\min\{b_T^{(i)}-g_T^{(i)}+E^{(i+1)},\bar{B}\},\, E^{(i+1)}\big),\]
and the stage reward is given by
\begin{equation}
r(b_1^{(i)}, E^{(i)}, g_1^{(i)}, g_T^{(i)})
=\tfrac{T-1}{T}C(g_1^{(i)}) + \tfrac{1}{T}C(g_T^{(i)}).
\end{equation}
% \begin{align}
% r(b_T^{(i)},E^{(i)},b_T^{(i)},g_T^{(i)})
% &=\tfrac{T-1}{T}C\Big(E^{(i)}-\tfrac{b_T^{(i)}-b_1^{(i)}}{T-1}\Big)
% +\tfrac{1}{T}C(g_T^{(i)}).
% \label{eq:reward_reduction}
% \end{align}
\label{lemma:MDP_reduction}
\end{lemma}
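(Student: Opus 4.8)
The plan is to prove the equivalence by showing that restricting the original block action $(g_1^{(i)},\ldots,g_T^{(i)})$ to the structured form $g_1^{(i)}=\cdots=g_{T-1}^{(i)}$ with no clipping at the intermediate battery updates is without loss of optimality, and that under this restriction the action is parametrized exactly by the pair $(g_1^{(i)},g_T^{(i)})$ with the stated constraint set, reward, and transition. Before the reduction I would record one structural fact about the state: since $g_T^{(i)}\le b_T^{(i)}$ forces $b_T^{(i)}-g_T^{(i)}\ge 0$, the transition $b_1^{(i+1)}=\min\{b_T^{(i)}-g_T^{(i)}+E^{(i+1)},\bar B\}$ gives $b_1^{(i+1)}\ge\min\{E^{(i+1)},\bar B\}$; with $b_1^{(1)}=\bar B$ this yields $b_1^{(i)}\ge\min(E^{(i)},\bar B)$ for every $i$. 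In particular, whenever $E^{(i)}\le\bar B$ we have $b_1^{(i)}\ge E^{(i)}$, which is precisely the fact that makes the equalization step feasible.

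I would then carry out the two reduction steps sketched in Section~\ref{subsec:preliminary_results}. For no overflow: if for some $j\le T-1$ the unclipped update exceeds $\bar B$, increasing $g_{j-1}^{(i)}$ by the excess leaves all subsequent battery states and the transition unchanged while strictly raising $C(g_{j-1}^{(i)})$, so an optimal policy may be taken to satisfy $b_j^{(i)}=b_{j-1}^{(i)}-g_{j-1}^{(i)}+E^{(i)}$ for $j=2,\ldots,T-1$. Under this additive evolution, fixing the total first-block consumption $S=\sum_{j=1}^{T-1}g_j^{(i)}$ pins down $b_T^{(i)}=b_1^{(i)}+(T-1)E^{(i)}-S$ and hence the entire future, leaving only the split of $S$ free; by concavity of $C(\cdot)$ and Jensen's inequality the equal split $g_j^{(i)}=S/(T-1)$ maximizes $\sum_{j=1}^{T-1}C(g_j^{(i)})$. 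The equal split must then be checked feasible: writing $b_j^{(i)}=b_1^{(i)}+(j-1)(E^{(i)}-g_1^{(i)})$, the map $j\mapsto b_j^{(i)}$ is affine, so it attains its extrema at the endpoints and never exceeds $\max(b_1^{(i)},b_T^{(i)})\le\bar B$ (no new overflow); and, using $b_1^{(i)}\ge E^{(i)}$, it stays above $g_1^{(i)}$ (power-feasible via \eqref{eq:power}) exactly when $g_1^{(i)}\le\frac{b_1^{(i)}+(T-2)E^{(i)}}{T-1}=E^{(i)}-\frac{E^{(i)}-b_1^{(i)}}{T-1}$, the binding slot-$(T-1)$ causality constraint, which equals the maximal admissible consumption divided by $T-1$. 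This is exactly \eqref{eq:g1_action_space} once the $\min$ with $\bar B$ is appended.

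The remaining pieces I would read off mechanically: $b_T^{(i)}=\min\{b_1^{(i)}+(T-1)(E^{(i)}-g_1^{(i)}),\bar B\}$, with the $\min$ absorbing both the case $E^{(i)}\ge\bar B$ (where $b_1^{(i)}=\bar B$, the constraint collapses to $g_1^{(i)}\le\bar B$, and overflow is unavoidable) and any clipping at the final intermediate update; the constraint $0\le g_T^{(i)}\le b_T^{(i)}$ of \eqref{eq:gT_action_space}, again from \eqref{eq:power}; the transition as stated; and the per-stage reward $\frac1T\sum_{j=1}^{T}C(g_j^{(i)})=\frac{T-1}{T}C(g_1^{(i)})+\frac1T C(g_T^{(i)})$. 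I would then package these into the equivalence: every admissible policy of the reduced MDP lifts to an admissible policy of the original MDP with identical reward and transition, and by the two reduction steps every policy of the original MDP is dominated by such a lift, so the two MDPs share the same optimal long-term average reward and the same optimal policies.

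The step I expect to be the main obstacle is the feasibility of the equal split, namely verifying that equalizing the first $T-1$ powers creates neither a violation of $g_j^{(i)}\le b_j^{(i)}$ nor a new overflow. This is where the innocuous-looking bound $b_1^{(i)}\ge\min(E^{(i)},\bar B)$ does the real work: without it, starting a block with battery below $E^{(i)}$ would force a ramp-up allocation and equalization could violate the early-slot power constraints, breaking the reduction entirely. The accompanying bookkeeping---treating $E^{(i)}\ge\bar B$ separately and confirming that the $\min$ with $\bar B$ in \eqref{eq:g1_action_space} and in $b_T^{(i)}$ keeps the reward and transition formulas correct even when intermediate clipping occurs---is routine by comparison.
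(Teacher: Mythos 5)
Your proposal is correct and follows essentially the same route as the paper's Appendix~\ref{sec:MDP_reduction}: equalize the first $T-1$ allocations via concavity of $C(\cdot)$, verify admissibility of the equalized policy by noting the battery trajectory becomes affine in $j$ (so no new overflow and the slot-$(T-1)$ constraint is the binding one, yielding \eqref{eq:g1_action_space}), handle $E^{(i)}\geq\bar{B}$ separately, and read off the transition and reward. The only cosmetic difference is that you remove intra-block overflows before equalizing, whereas the paper folds both steps into one bound on the average of the original allocations; the substance is identical.
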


Additionally, it follows from the fact that $0\leq g_T^{(i)}\leq b_T^{(i)}$, $b_1^{(i+1)}=\min(b_T^{(i)}-g_T^{(i)}+E^{(i+1)},\bar{B})$ and $E^{(i+1)}$ is independent of the state (or equivalently by the principle of optimality~\cite{Bertsekas2001vol1}), that the policy for $g_T^{(i)}$ can be a function of $b_T^{(i)}$ instead of $(b_1^{(i)},E^{(i)})$.
% Moreover, note that the limitation $g_1^{(i)}\geq E^{(i)}-\tfrac{\bar{B}-b_1^{(i)}}{T-1}$ is only meant to prevent battery overflows during the block; in fact, the condition $0\leq g_1^{(i)}\leq E^{(i)}-\tfrac{E^{(i)}-b_1^{(i)}}{T-1}$ is enough to ensure the policy is admissible, however it may be suboptimal.

The optimal policy can be found by solving the Bellman equation~\cite{Bertsekas2001vol2}, however this is hard to solve explicitly, even for the simple case of $T=1$ (i.i.d. energy arrivals).
Alternatively, it can be solved numerically using value iteration, but this can require extensive computation resources.
Specifically, since the state space is a continuous interval, and the action space is a two dimensional rectangle, only an approximate solution can be found. This is done by quantizing the state and actions spaces, a process which suffers from the curse of dimensionality.
Additionally, the numerical solution cannot provide insight into the structure of the optimal policy and the qualitative behavior of the optimal throughput, namely how it varies with the parameters of the problem.

In the next section, we propose an explicit online power control policy, and show that it is within a constant gap of $\frac{1}{2}\log e\approx 0.72$ to optimality, analogously to Theorem 2 of \cite{NearOptimal1} stated above.
This gap does not depend on any of the parameters of the problem, namely $\bar{B}$, $T$, or the distribution of the energy arrivals $P_E$.
Moreover, this policy yields a simple and insightful formula for the approximate throughput, 
%which reveals different behaviors for ``small'' or ``large'' battery size. It also 
which clarifies how the battery size needs to be chosen in terms of $T$ and $P_E$ for the resultant throughput to approach the AWGN capacity.

%-------------------%
\section{Main Result}
%-------------------%
\label{sec:main_result}

Note that if $E^{(i)} > \bar{B}$, then $b_1^{(i)}=b_2^{(i)}=\ldots=b_T^{(i)}=\bar{B}$ regardless of the allocated energy. Hence we can treat such energy arrivals as if $E^{(i)}=\bar{B}$, and for the rest of this section we will assume $E^{(i)}\leq\bar{B}$.

Before we formally state the main result of the paper, we informally motivate the policy we propose for the block i.i.d.\ energy arrival model. In the light of the discussion in the previous section, a natural way to extend the Fixed Fraction Policy of \cite{NearOptimal1} to the block i.i.d.\ model can be as follows: For an appropriately chosen $q\in[0,1]$, let
\begin{equation}\label{eq:ffp_simple}
g_j^{(i)}=\tfrac{q}{T}(b_1^{(i)}+(T-1)E^{(i)}),
\qquad j=1,\dots,T.
\end{equation}
The intuition behind this extension can be understood as follows: since the total energy to  be harvested throughout a block is known ahead of time in the first time slot of the block, this strategy decides on the total energy to be allocated in the current block $i$ by taking into account both the energy available in the battery in the first time slot of the block, $b_1^{(i)}$, and the energy that will be harvested in the remaining $T-1$ time slots, $(T-1)E^{(i)}$. The sum of these two quantities, i.e. the energy we already have in the battery plus the energy we known we will harvest, can be thought of as the energy we effectively have for this block. The total energy allocated to block $i$ is simply a fraction $q$ of the energy we effectively have. This total energy is then uniformly divided over the $T$ channel uses in the block, due to the concavity of the reward function akin to the optimal offline strategy.

We will adopt the policy in \eqref{eq:ffp_simple}, unless this allocation leads to an overflow of the battery during the block, and therefore a wasting of the harvested energy. Note that this strategy will not lead to a battery overflow throughout the block if and only if %the battery level at the last time slot exceeds the battery size, i.e. $b_T^{(i)}>\bar{B}$, or equivalently
\begin{equation}\label{eq:battery_nooverflow_condition}
\big(1-(T-1)\tfrac{q}{T}\big)(b_1^{(i)}+(T-1)E^{(i)}) \leq \bar{B}.
\end{equation}
When this is the case, the battery state at the beginning of the last time slot of the block is given by
\begin{align}
b_T^{(i)}&=b_1^{(i)}+(T-1)(E^{(i)}-g_1^{(i)})\nonumber\\
&=\big(1-(T-1)\tfrac{q}{T}\big)(b_1^{(i)}+(T-1)E^{(i)}).
\end{align}
Therefore, when \eqref{eq:battery_nooverflow_condition} is satisfied, we can  write the policy in \eqref{eq:ffp_simple} in a way that agrees with the optimal policy structure for the MDP formulation discussed in Section~\ref{subsec:preliminary_results}:
\begin{equation}
\begin{aligned}
g_1^{(i)}&=\tfrac{q}{T}(b_1^{(i)}+(T-1)E^{(i)}),\\
g_T^{(i)}&=\tfrac{q}{q+(1-q)T}b_T^{(i)}.
\end{aligned}
\label{eq:naive_policy}
\end{equation}

For blocks in which the condition \eqref{eq:battery_nooverflow_condition} is not satisfied, we would want to modify the policy \eqref{eq:ffp_simple} so as not to waste the harvested energy. Note that this condition can be checked at the beginning of the block and the energy allocations can be increased from that in \eqref{eq:ffp_simple} if the condition is not satisfied. In particular, if \eqref{eq:battery_nooverflow_condition} is not satisfied, we modify the policy to:
\begin{equation}
\begin{aligned}
g_1^{(i)}&=\min\big(E^{(i)}-\tfrac{\bar{B}-b_1^{(i)}}{T-1}, \bar{B}\big),\\
g_T^{(i)}&=\tfrac{q}{q+(1-q)T}\bar{B}.
\end{aligned}
\label{eq:large_arrival_policy}
\end{equation}
The energy allocations in the first $T-1$ time slots are increased so that energy is not wasted, and the battery is fully charged after the last energy arrival, i.e. $b_T^{(i)}=\bar{B}$. 
Note that the energy allocated at the last time slot follows the same policy as \eqref{eq:naive_policy}, since $b_T^{(i)}=\bar{B}$.

While we can use the condition in \eqref{eq:battery_nooverflow_condition} for switching between the two modes of the policy, for small and large energy arrivals respectively, as discussed above, we would want to simplify this condition in a way that does not significantly degrade the performance but simplifies the following discussion.
%battery overflow will occur somewhere throughout the block if the battery level at the last time slot exceeds the battery size, i.e. $b_T^{(i)}>\bar{B}$, or equivalently
%\begin{equation}\label{eq:battery_overflow_condition}
%\frac{q+T(1-q)}{T}
%(b_1^{(i)}+(T-1)E^{(i)}) > \bar{B}.
%\end{equation}
If we assume the battery was empty at the end of the previous block, i.e. $b_1^{(i)}=E^{(i)}$, the  no battery overflow condition \eqref{eq:battery_nooverflow_condition} would be equivalent to $E^{(i)} \leq E_c$, where $E_c$ is a \emph{critical energy level} given by 
\begin{equation}
E_c=\frac{\bar{B}}{q+T(1-q)}.
\label{eq:def_Ec_alt}
\end{equation}
Note that when $E^{(i)} > E_c$, battery overflow will occur regardless of the state of the battery at the end of the previous block. Specifically, we propose to allocate energy according to \eqref{eq:large_arrival_policy} when $E^{(i)} > E_c$, and use \eqref{eq:naive_policy} when $E^{(i)} \leq E_c$.

It remains to choose the fixed fraction $q$. Recall that in the i.i.d.\ case, as discussed in Section~\ref{subsec:background}, $q$ was chosen to be $\frac{\mathbb{E}[\min(E_t,\bar{B})]}{\bar{B}}$, where $\bar{B}$ is the size of the battery
and $\mathbb{E}[\min(E_t,\bar{B})]$ is the average energy available in an epoch, which is the period between consecutive ``large arrival'' events $\{E_t>\bar{B}\}$.
%This can be thought as the inverse of the average time needed to fully charge the battery or reset the memory in the system. (Recall how this strategy was inspired by the i.i.d. Bernoulli case.) 
In the block i.i.d.\ case, observe that when the event $\{E^{(i)}> E_c\}$ occurs at block~$i$, the battery will be fully charged at the end of the block.
Hence, let $p=\Pr(E^{(i)}> E_c)$, and imagine we put aside the first $T-1$ time slots of the ``large arrival'' block (in which we abandon the fixed fraction policy), and instead concentrate only on the subsequent slots (where we do apply it). The average energy available for this period can be computed as 
$\varepsilon=\bar{B}+(\frac{1}{p}-1)T\mathbb{E}\big[E^{(i)}\big|E^{(i)}\leq E_c\big]$,
because the battery is fully charged at the last time slot of the large arrival block, and at each one of the subsequent ``low-energy'' blocks the transmitter harvests an average amount of energy equal to $T\mathbb{E}\big[E^{(i)}\big|E^{(i)} \leq E_c\big]$. The duration of an epoch is, on average, $\tau=1 + T(\frac{1}{p}-1)$ slots. Therefore, the average energy available per time slot during this period is again given by $\varepsilon/\tau$. Note that the system is reset whenever an energy arrival larger than $E_c$ occurs, which leaves the battery fully charged at the beginning of the next epoch. Therefore, inspired by the i.i.d. case, given $E_c$ we may want to choose 
\begin{equation}
q=\frac{\varepsilon/\tau}{E_c}=\frac{\bar{B}+(\frac{1}{p}-1)T\mathbb{E}\big[E^{(i)}\big|E^{(i)} \leq E_c\big]}{E_c(1 + T(\frac{1}{p}-1))}.
\label{eq:def_q_alt}
\end{equation}
Recall however that given $q$, we want to choose $E_c$ as in \eqref{eq:def_Ec_alt}. These two desired relations for $E_c$ and $q$, along with the identity
\[
\mathbb{E}[\min(E^{(i)},E_c)]=p E_c+(1-p)\mathbb{E}[E^{(i)}|E^{(i)}\leq E_c],
\]
 can be solved to obtain the following equation: 
\begin{equation}
T E_c - (T-1)\mathbb{E}[\min(E^{(i)},E_c)] = \bar{B},
\label{eq:def_Ec}
\end{equation}
which can be solved for $E_c$ for given $T$, $\bar{B}$, and $P_E$ (it is shown in Appendix~\ref{sec:unique_Ec} that it has a unique solution in the interval $[0,\bar{B}]$).
Additionally, combining \eqref{eq:def_Ec_alt} and \eqref{eq:def_Ec} yields the following simple formula for $q$ given $E_c$: 
\begin{equation}
q = \frac{\mathbb{E}[\min(E^{(i)},E_c)]}{E_c}.
\label{eq:def_q}
\end{equation}
Note that this is essentially the same expression for $q$ as in the i.i.d.\ case, with $\bar{B}$ replaced by~$E_c$. Indeed, when $T=1$, eq. \eqref{eq:def_Ec} reduces to $E_c=\bar{B}$ and hence \eqref{eq:def_q} reduces to $\frac{\mathbb{E}[\min(E^{(i)},\bar{B})]}{\bar{B}}$.

To summarize, the online policy we propose for the block i.i.d case is given as follows.
\begin{policy}\label{policy}
 Given $T$, $\bar{B}$, and $P_E$ (the distribution of $E^{(i)}$), compute $E_c$ and $q$ according to \eqref{eq:def_Ec} and \eqref{eq:def_q}. Then apply
\begin{equation}
\begin{aligned}
g_1^{(i)}&=\begin{cases}
\tfrac{q}{T}(b_1^{(i)}+(T-1)E^{(i)}),& \text{ if } E^{(i)}\leq E_c,\\
E^{(i)}-\tfrac{\bar{B}-b_1^{(i)}}{T-1},& \text{ if } E_c< E^{(i)}\leq\bar{B},\\
\bar{B},& \text{ if }\bar{B}< E^{(i)},
\end{cases}\\
g_T^{(i)}&=\tfrac{q}{q+(1-q)T}b_T^{(i)},
\end{aligned}
\end{equation}
and note that $g_j^{(i)}=g_1^{(i)}$ for $j=2,\dots, T-1$.
\end{policy}
The main result of this paper is to prove that this policy is optimal within the same gap as in the i.i.d.\ case, as stated in the following theorem. 
%\begin{remark}
%In practice, one may replace Policy~\ref{policy} with the following policy, which performs even better as it further avoids battery overflows:
%\begin{equation}
%\begin{aligned}
%g_1^{(i)}&=\begin{cases}
%\tfrac{q}{T}(b_1^{(i)}+(T-1)E^{(i)}),
	%&\text{if }\tfrac{b_1^{(i)}+(T-1)E^{(i)}}{T}\leq E_c,\\
%E^{(i)}-\tfrac{\bar{B}-b_1^{(i)}}{T-1},
	%&\text{if }E_c<\tfrac{b_1^{(i)}+(T-1)E^{(i)}}{T}\leq\bar{B},\\
%\bar{B},
	%&\text{if }\bar{B}<\tfrac{b_1^{(i)}+(T-1)E^{(i)}}{T},
%\end{cases}\\
%g_T^{(i)}&=\tfrac{q}{q+(1-q)T}b_T^{(i)}.
%\end{aligned}
%\end{equation}
%However, in this work we will focus on Policy~\ref{policy}, as it lends itself to a more tractable analysis.
%\end{remark}

\begin{theorem}\label{thm:throughput_bounds}
Let $E_c$ be the unique solution of~\eqref{eq:def_Ec}, and let $p=\Pr(E^{(i)}> E_c)$. Then the optimal throughput is bounded by 
\[\bar{\Theta}-\tfrac{1}{2}\log e\leq\Theta\leq\bar{\Theta},\]
where
\begin{equation}
\begin{aligned}
\bar{\Theta}
&=\tfrac{p(T-1)}{T}\mathbb{E}\left[\left.C\left(\min\left\{E^{(i)}-\tfrac{\bar{B}-E^{(i)}}{T-1},\,\bar{B}\right\}\right)\right|E^{(i)}> E_c\right]\\*
&\qquad +\tfrac{p+T(1-p)}{T}C\big(\mathbb{E}\big[\min(E^{(i)},E_c)\big]\big),
\end{aligned}
\label{eq:theta_bar_general}
\end{equation}
and the lower bound is achieved by Policy~\ref{policy}.
\end{theorem}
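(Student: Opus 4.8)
The plan is to establish the two inequalities separately: the upper bound $\Theta\le\bar\Theta$ for \emph{every} admissible policy, and the matching lower bound $\mathscr{T}(\mathbf{g})\ge\bar\Theta-\tfrac12\log e$ for Policy~\ref{policy}, which by \eqref{eq:def_optimal_throughput} gives $\Theta\ge\bar\Theta-\tfrac12\log e$. Both arguments exploit the same two ingredients. First, the blocks with $E^{(i)}>E_c$ act as regeneration epochs; accordingly I partition the slots into class~(A), the first $T-1$ slots of every ``large'' block ($E^{(i)}>E_c$), and class~(B), all remaining slots (the last slot of every large block together with all slots of the ``small'' blocks). Their long-run frequencies are $\tfrac{p(T-1)}{T}$ and $\tfrac{p+T(1-p)}{T}$, exactly the weights in \eqref{eq:theta_bar_general}. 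Second, the defining equation \eqref{eq:def_Ec} is engineered so that, together with $\mathbb{E}[\min(E^{(i)},E_c)]=pE_c+(1-p)\mathbb{E}[E^{(i)}|E^{(i)}\le E_c]$, the average energy handed to the class-(B) slots per class-(B) slot is precisely $\mathbb{E}[\min(E^{(i)},E_c)]$.

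For the \emph{upper bound}, I would first invoke Lemma~\ref{lemma:MDP_reduction} to restrict to policies that are constant over the first $T-1$ slots of each block and incur no avoidable overflow. For class~(B) the energy available per epoch is at most $\bar B$ (the battery entering the last slot of the large block, since $b_t\le\bar B$) plus the small-block harvests; dividing by the class-(B) slot count and using \eqref{eq:def_Ec} yields an average power at most $\mathbb{E}[\min(E^{(i)},E_c)]$, so Jensen's inequality (concavity and monotonicity of $C$) bounds the class-(B) reward by the second term of $\bar\Theta$. Class~(A) is the subtle part: its reward \emph{cannot} be bounded by the first term of $\bar\Theta$ slot-by-slot, since a policy that drains the battery uses more than the no-waste power $\tfrac{TE^{(i)}-\bar B}{T-1}=E^{(i)}-\tfrac{\bar B-E^{(i)}}{T-1}$. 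Instead I would run a convexity/exchange argument: for any non-wasteful policy the class-(A) power is at least this no-waste level, and this level necessarily exceeds the class-(B) power $\mathbb{E}[\min(E^{(i)},E_c)]$; hence by concavity of $C$ the marginal reward in class~(A) is smaller than in class~(B), so transferring energy from class~(A) to class~(B)—i.e.\ filling the battery to $\bar B$ and draining the small blocks down to $E^{(i)}$ before the next large block—never decreases the total. The extremal allocation this produces realizes exactly (term~A, term~B)$=\bar\Theta$, so every admissible policy is bounded by $\bar\Theta$.

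For the \emph{lower bound} I would evaluate Policy~\ref{policy} directly. Because a large block ends with $b_T^{(i)}=\bar B$ by \eqref{eq:large_arrival_policy}, the events $\{E^{(i)}>E_c\}$ are genuine renewal instants of the battery chain induced by the policy, and renewal-reward writes $\mathscr{T}(\mathbf{g})$ as expected per-epoch reward over expected epoch length. In the first $T-1$ slots of a large block the policy spends $E^{(i)}-\tfrac{\bar B-b_1^{(i)}}{T-1}$; since $b_1^{(i)}\ge E^{(i)}$, monotonicity of $C$ lower-bounds this contribution by the first term of $\bar\Theta$ \emph{verbatim}, so no gap is incurred there. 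The remaining slots contribute a fixed-fraction reward of the form $\mathbb{E}[C(\tfrac{q}{q+(1-q)T}b_T^{(i)})]$ together with its within-block analogues. Following \cite{NearOptimal1}, I would lower-bound this expectation by replacing the conditional law of $E^{(i)}$ given $E^{(i)}\le E_c$ with a more concentrated surrogate under which the sequence $b_T^{(i)}$ has an explicit stationary law; a coupling/convexity argument shows the surrogate only decreases the reward, and on it the expectation is computed in closed form. A Jensen-type inequality in the spirit of Theorem~\ref{thm:onlinePC} then converts this closed form into $C(\mathbb{E}[\min(E^{(i)},E_c)])$ at the cost of the additive $\tfrac12\log e$, which matches the second term of $\bar\Theta$ and completes the bound.

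I expect the \emph{main obstacle} to be the lower bound, specifically constructing the surrogate distribution and proving both that it lowers the throughput and that it keeps the battery dynamics explicitly solvable; the difficulty relative to \cite{NearOptimal1} is that the block structure couples the within-block allocation $g_1^{(i)}$ to the cross-block recursion for $b_T^{(i)}$, so the surrogate must be ``slightly more involved'' than a Bernoulli law and the coupling argument must respect this coupling. On the upper-bound side, the delicate step is justifying the exchange argument rigorously: one must verify that it is the capacity constraint $b_t\le\bar B$, rather than causality, that binds on the surplus of a large block, so that this surplus cannot be deferred and the claimed extremal allocation is indeed the concave maximizer over the feasible region.
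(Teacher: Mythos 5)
Your overall architecture matches the paper's: the class-(A)/class-(B) split with weights $\tfrac{p(T-1)}{T}$ and $\tfrac{p+T(1-p)}{T}$, the role of \eqref{eq:def_Ec}, and the regeneration at $\{E^{(i)}>E_c\}$ are exactly the ingredients used. On the lower bound your plan is essentially the paper's proof: the ``more concentrated surrogate'' is the mean-preserving modification $\hat E^{(i)}=W\cdot 1\{E^{(i)}\le E_c\}+E^{(i)}\cdot 1\{E^{(i)}>E_c\}$ with $W\in\{0,E_c\}$; your ``coupling/convexity argument'' is carried out there as a backward induction showing the finite-horizon value $\hat J_{N}(x,s)$ is concave and non-decreasing in the effective-energy state $x=\tfrac{b_1+(T-1)E}{T}$, so a Bernoulli-is-worst lemma applies conditionally on $\{E^{(i)}\le E_c\}$; and the $\tfrac12\log e$ arises from $C(q(1-q)^{i-1}E_c)\ge C(qE_c)+\tfrac{i-1}{2}\log(1-q)$ summed over a geometric epoch, exactly as in Theorem~\ref{thm:onlinePC}. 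You correctly flag this induction as the main obstacle; note the paper also needs the intermediate ``semi-Bernoulli'' class (support in $\{0\}\cup[E_c,\infty)$) because the surrogate is not purely Bernoulli, and the first-term match you call ``verbatim'' indeed holds via $b_1^{(i)}\ge E^{(i)}$.

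Where you genuinely diverge is the upper bound, and this is where the gap lies. The paper uses no exchange argument: after Lemma~\ref{lemma:MDP_reduction} bounds the first $T-1$ rewards of each block by $(T-1)C(\min\{E^{(i)}-\tfrac{b_T^{(i)}-b_1^{(i)}}{T-1},\bar B\})$ for \emph{any} admissible policy, it applies Jensen repeatedly to reduce everything to time-averaged expectations $\bar\gamma,\bar\beta,\bar\beta_0,\bar\beta_1(x)$ of the last-slot power and battery levels, then maximizes a concave objective over these scalars subject to the consistency constraint $\bar\beta=(1-p)\bar\beta_0+\sum_x P_E(x)\bar\beta_1(x)+p'\bar B$, verifying via KKT that the optimum is $\beta_0^\star=0$, $\beta_1^\star(x)=\bar B$, $\gamma^\star=\beta^\star=p\bar B$ with value $\bar\Theta$. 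Your exchange heuristic predicts the same extremal point, and your marginal-rate comparison ($\tfrac{TE^{(i)}-\bar B}{T-1}>\mathbb{E}[\min(E^{(i)},E_c)]$ for $E^{(i)}>E_c$, by \eqref{eq:def_Ec}) is exactly the content of the KKT sign checks, but as stated it is not a proof: (i) an upper bound must cover arbitrary, possibly wasteful policies, so you cannot restrict to non-wasteful ones nor assert a lower bound on the class-(A) power --- the paper instead upper-bounds the class-(A) reward in terms of the battery drift $b_T^{(i)}-b_1^{(i)}$; and (ii) a pointwise marginal-reward comparison does not optimize an expectation of concave functions of coupled random variables --- the coupling between $b_T^{(i)}$, $g_T^{(i-1)}$ and $E^{(i)}$ across the two classes is precisely what the linear constraint in the convex program encodes. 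Replacing the exchange step with that finite-dimensional convex program (or an equivalent rigorous optimization) is what is needed to complete your plan.
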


Note that the structure of the approximately optimal throughput expression has a natural interpretation in terms of Policy~\ref{policy}. 
The expression has two terms, corresponding to the two different operation modes of the policy.
The first term corresponds to the throughput achieved in the first $T-1$ time slots of a large energy arrival block. Note that these time slots correspond to a fraction $\frac{p(T-1)}{T}$ of the total time on average.
In the remaining fraction of the time, we apply the Fixed Fraction Policy, which, analogously to the i.i.d.\ case, achieves a throughput 
$C(\mathbb{E}[\min(E^{(i)},E_c)])$, where $\mathbb{E}[\min(E^{(i)},E_c)]=\varepsilon/\tau=q E_c$ is the average available energy rate for a low-energy period.

Theorem~\ref{thm:throughput_bounds} is proved by showing the throughput obtained by Policy~\ref{policy} under the process $E^{(i)}$ is lower bounded by the throughput obtained by Policy~\ref{policy} under a different block i.i.d. energy arrivals process $\hat{E}^{(i)}$, with a \emph{modified} distribution.
This modified distribution has structure similar to a Bernoulli distribution (the exact distribution will be precisely defined in Section~\ref{sec:lower_bound}).
The analysis of Policy~\ref{policy} turns out to be easier for Bernoulli distributions, since the regenerative structure discussed previously is inherent in the arrivals process.
Specifically, the case when $E^{(i)}\in\{0,\bar{E}\}$ for some $\bar{E}>0$ was solved in \cite{BlockiidBernoulli}, and was the basis for this work.

Denote $\mu=\mathbb{E}[E^{(i)}]$.
For all ergodic energy arrival processes (including block i.i.d.),
the AWGN capacity $\tfrac{1}{2}\log(1+\mu)$ is always an upper bound on the throughput, for any finite battery size.
However, as shown in the following corollary, in our block i.i.d.\ model this is nearly achievable only if the battery size is large enough.
\begin{corollary}
If $\bar{B}\geq \mu+T(E_{\max}-\mu)$, where $E^{(i)}\leq E_{\max}$ with probability 1, the approximate throughput reduces to
\begin{equation}
	\bar{\Theta}=C(\mu)=\tfrac{1}{2}\log(1+\mu).
	\label{eq:theta_bar_large_battery}
\end{equation}
\end{corollary}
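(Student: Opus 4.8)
The plan is to show that the hypothesis $\bar{B}\geq\mu+T(E_{\max}-\mu)$ forces the critical level $E_c$ to lie at or above the top of the support of the arrival distribution, so that $p=\Pr(E^{(i)}>E_c)=0$ and the ``large-arrival'' mode of Policy~\ref{policy} is never triggered. Once this is established, the expression \eqref{eq:theta_bar_general} collapses to its single Fixed-Fraction term, which equals $C(\mu)$.

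First I would introduce the auxiliary function $f(x)=Tx-(T-1)\mathbb{E}[\min(E^{(i)},x)]$, so that by \eqref{eq:def_Ec} the critical level is characterized as the solution of $f(E_c)=\bar{B}$. Using the representation $\mathbb{E}[\min(E^{(i)},x)]=\int_0^x\Pr(E^{(i)}>t)\,dt$, valid for nonnegative arrivals, its derivative is $\Pr(E^{(i)}>x)\in[0,1]$, so
\[
f'(x)=T-(T-1)\Pr(E^{(i)}>x)=1+(T-1)\Pr(E^{(i)}\leq x)\geq 1>0 ,
\]
and $f$ is strictly increasing. This is precisely the monotonicity underlying the uniqueness of $E_c$ established in Appendix~\ref{sec:unique_Ec}, which I would simply invoke.

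Next I would evaluate $f$ at $x=E_{\max}$. Since $E^{(i)}\leq E_{\max}$ with probability one, $\min(E^{(i)},E_{\max})=E^{(i)}$ and hence $\mathbb{E}[\min(E^{(i)},E_{\max})]=\mu$, giving
\[
f(E_{\max})=TE_{\max}-(T-1)\mu=\mu+T(E_{\max}-\mu).
\]
The hypothesis of the corollary is therefore exactly $\bar{B}\geq f(E_{\max})$, i.e. $f(E_c)\geq f(E_{\max})$, which by strict monotonicity of $f$ is equivalent to $E_c\geq E_{\max}$.

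Finally, $E_c\geq E_{\max}$ yields $p=\Pr(E^{(i)}>E_c)=0$ and $\min(E^{(i)},E_c)=E^{(i)}$ almost surely, so $\mathbb{E}[\min(E^{(i)},E_c)]=\mu$. Substituting into \eqref{eq:theta_bar_general}, the first term vanishes and the second becomes $\tfrac{p+T(1-p)}{T}C(\mu)=C(\mu)$, proving the claim. I do not expect a genuine obstacle; the only delicate point is interpreting the first term of \eqref{eq:theta_bar_general}, written as $p$ times a conditional expectation, when $p=0$ — this is resolved by reading it as the unconditional expectation $\mathbb{E}\big[\,(\cdots)\,\mathbf{1}\{E^{(i)}>E_c\}\big]$ over the null event $\{E^{(i)}>E_c\}$, which is zero.
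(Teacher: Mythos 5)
Your proof is correct and follows essentially the same route as the paper's: both arguments reduce to showing that the large-battery hypothesis forces $E_c \geq E_{\max}$, hence $p=\Pr(E^{(i)}>E_c)=0$ and $\mathbb{E}[\min(E^{(i)},E_c)]=\mu$, after which \eqref{eq:theta_bar_general} collapses to $C(\mu)$. The only cosmetic difference is that the paper exhibits the explicit solution $E_c=\mu+\tfrac{\bar{B}-\mu}{T}$ and verifies it against \eqref{eq:def_Ec} directly, whereas you deduce $E_c\geq E_{\max}$ from the strict monotonicity already established in Appendix~\ref{sec:unique_Ec}; both certificates are valid.
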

\begin{proof}
Since $E^{(i)}\leq \mu+\frac{\bar{B}-\mu}{T}$, choose $E_c=\mu+\frac{\bar{B}-\mu}{T}$ and observe that $\mathbb{E}[\min(E^{(i)},E_c)]=\mu$ and therefore $E_c$ is the solution to~\eqref{eq:def_Ec}.
It follows that $p=0$ and \eqref{eq:theta_bar_general} reduces to \eqref{eq:theta_bar_large_battery}.
% It follows that $p=\Pr(E^{(i)}=Ec)$ and $\eqref{eq:theta_bar_general}$ reduces to 
% \begin{align*}
% \bar{\Theta} &= \tfrac{p(T-1)}{T}C\big(E_c-\tfrac{\bar{B}-E_c}{T-1}\big)+\tfrac{p+T(1-p)}{T}C(\mu)\\
% &=C(\mu).
% \end{align*}
\end{proof}
We identify the case $\bar{B}\geq \mu+T(E_{\max}-\mu)$ as the \emph{large battery regime}.
The threshold $\mu+T(E_{\max}-\mu)$ can be intuitively interpreted as follows: When the battery size is infinite, it is straightforward to observe that the optimal policy is to allocate a constant amount of power equal to the mean energy arrival rate $g_t=\mu$ (cf. \cite{OzelUlukus2012,online_infiniteB4}).
Assume the battery was empty prior to the beginning of block $i$, i.e. $b_1^{(i)}=E^{(i)}$, and we apply this policy for all the time slots in block $i$. Then the battery level at the last time slot of the block will be $b_T^{(i)}=\mu+T(E^{(i)}-\mu)$.
This implies that we would need a battery size of at least $\mu+T(\bar{E}-\mu)$ in order to not waste energy due to an overflow, since $E^{(i)}$ can take values up to $E_{\max}$. 
However, the fact that we can nearly achieve the AWGN capacity as soon as the battery size is larger than this threshold is indeed surprising.

% We provide the proof of Theorem~\ref{thm:throughput_bounds} in the remainder of the paper. 
% In Section~\ref{sec:lower_bound} we prove the lower bound on the throughput, and in Section~\ref{sec:upper_bound} we derive an upper bound which differs from the lower bound by no more than 0.72 bits per channel use. However, before we proceed to the proof of Theorem~\ref{thm:throughput_bounds}, we next discuss how the optimal online throughput can be related to the information-theoretic capacity of the channel. Section~\ref{sec:conclusion} concludes the paper.

%-----------------------------------------%
\subsection{Connection to Channel Capacity}
%-----------------------------------------%

In this section, we will show how the approximate throughput of Theorem~\ref{thm:throughput_bounds} can provide an approximation to the information-theoretic capacity of the channel.
In this section we will use the notation $X_m^n=(X_m,X_{m+1},\ldots,X_n)$ for $m\leq n$, and $X^n=X_1^n$.

We consider an AWGN channel, i.e. the output is $Y_t=X_t+Z_t$, where $Z_t\sim\mathcal{N}(0,1)$ and $X_t\in\mathbb{R}$ is the input.
Instead of the energy constraints \eqref{eq:power} and \eqref{eq:battery}, which were only concerned with the amount of transmitted power at each time slot, the information-theoretic model imposes energy constraints on the amplitude of the transmitted symbol: at time $t$,
\begin{align}
X_t^2 &\leq B_t,\label{eq:inf_theory_amplitude}\\
B_t &= \min\{B_{t-1} - X_{t-1}^2 + E_t, \bar{B}\},\label{eq:inf_theory_battery}
\end{align}
where again we assume $B_1=\bar{B}$, and the harvesting process $\{E_t\}_{t=1}^{\infty}$ is the same as in Section~\ref{sec:system_model}.

Instead of a power control policy, our goal is to find a set of encoding functions and a decoding function:
\begin{align*}
&f_t^{\mathrm{enc}} : \mathcal{M} \times \mathcal{E}^t \to \mathcal{X},
\qquad t=1,\ldots,N,\\
&f^{\mathrm{dec}} : \mathcal{Y}^N \to \mathcal{M},
\end{align*}
where $\mathcal{M}=\{1,\ldots,2^{NR}\}$ is the message set and $\mathcal{X}=\mathcal{Y}=\mathbb{R}$ are the input and output spaces.
As usual, the capacity $C$ is the supremum of all rates $R$ for which the probability of error vanishes as $N\to\infty$ (see \cite{ShavivNguyenOzgur2016} for a detailed formulation of the problem).

The main result of this section is the following approximation to capacity.
\begin{theorem}\label{thm:capacity_bounds}
The capacity of the energy harvesting channel with block i.i.d.\ energy arrivals is bounded by
\begin{equation}
\bar{\Theta} - \tfrac{1}{T}H(E^{(i)}) - \tfrac{1}{2}\log\big(\tfrac{\pi e^2}{2}\big) \leq C \leq \bar{\Theta},
\end{equation}
where $\bar{\Theta}$ is given by \eqref{eq:theta_bar_general}.
\end{theorem}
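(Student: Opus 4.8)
The plan is to prove the two bounds separately. The upper bound $C\le\bar{\Theta}$ will follow from the information-theoretic converse $C\le\Theta$ combined with the throughput bound $\Theta\le\bar{\Theta}$ already proved in Theorem~\ref{thm:throughput_bounds}. For the converse I would start from Fano's inequality, $NR\le I(M;Y^N)+N\epsilon_N$, and bound $I(M;Y^N)\le I(M;Y^N\mid E^N)=I(X^N;Y^N\mid E^N)$, using $M\perp E^N$ and that $X^N$ is a deterministic function of $(M,E^N)$. Since the channel is memoryless given the inputs, and future energies are independent of the present input and noise, this splits as $\sum_t I(X_t;Y_t\mid E^t)\le\sum_t\mathbb{E}\big[\tfrac{1}{2}\log\big(1+\mathbb{E}[X_t^2\mid E^t]\big)\big]$ by the Gaussian maximum-entropy bound and Jensen. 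The induced powers $g_t(E^t)=\mathbb{E}[X_t^2\mid E^t]$ form an admissible online policy: taking conditional expectations of $X_t^2\le B_t$ and of the battery recursion turns the per-realization constraints into the online energy constraints, so the right-hand side is at most $N\Theta+o(N)$, exactly as in \cite{ShavivNguyenOzgur2016}. Hence $C\le\Theta\le\bar{\Theta}$.

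For the lower bound I would build an explicit coding scheme driven by Policy~\ref{policy}. Run Policy~\ref{policy} in a \emph{virtual} mode to generate the deterministic power sequence $g_t=g_t(E^t)$, and transmit $X_t$ drawn uniformly on $[-\sqrt{g_t},\sqrt{g_t}]$. Because $X_t^2\le g_t$ and the policy is admissible ($g_t\le b_t$), the actual battery always dominates the virtual one, so the hard amplitude constraint $X_t^2\le B_t$ holds for every realization and $g_t$ stays a deterministic function of $E^t$. Conditioned on $E^N$ the inputs are independent across time, so the entropy power inequality gives, per symbol, $I(X_t;Y_t)\ge\tfrac{1}{2}\log\big(1+\tfrac{2g_t}{\pi e}\big)\ge\tfrac{1}{2}\log(1+g_t)-\tfrac{1}{2}\log\tfrac{\pi e}{2}$, the second step holding for all $g_t\ge0$ since $\pi e\ge2$. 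Averaging and invoking the lower bound of Theorem~\ref{thm:throughput_bounds} (the throughput of Policy~\ref{policy} is at least $\bar{\Theta}-\tfrac{1}{2}\log e$) yields $\tfrac{1}{N}I(X^N;Y^N\mid E^N)\ge\bar{\Theta}-\tfrac{1}{2}\log\tfrac{\pi e^2}{2}$, where I used $\tfrac{1}{2}\log\tfrac{\pi e}{2}+\tfrac{1}{2}\log e=\tfrac{1}{2}\log\tfrac{\pi e^2}{2}$.

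It remains to pay for the decoder's ignorance of the energy arrivals. The key identity is $I(M;Y^N)\ge I(M;Y^N\mid E^N)-H(E^N)$, obtained from $I(M;E^N)=0$ and $I(M;E^N\mid Y^N)\le H(E^N)$; since $X^N$ is a function of $(M,E^N)$ the first conditional term equals $I(X^N;Y^N\mid E^N)$. Because the block i.i.d.\ process is determined by one arrival per block, $H(E^N)=\lceil N/T\rceil H(E^{(i)})$, so $\tfrac{1}{N}H(E^N)\to\tfrac{1}{T}H(E^{(i)})$. A random-coding argument with the product input distribution above, together with Fano (as detailed in \cite{ShavivNguyenOzgur2016}), makes $\tfrac{1}{N}I(M;Y^N)$ achievable, giving the claimed lower bound.

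The main obstacle is the lower bound, where two points require care. First, the per-realization amplitude constraint $X_t^2\le B_t$ forbids Gaussian inputs; the uniform-input-plus-EPI construction is what converts the policy's power allocation into a feasible signaling scheme while losing only the bounded, parameter-independent gap $\tfrac{1}{2}\log\tfrac{\pi e}{2}$, and the virtual-policy coupling is what guarantees feasibility for \emph{every} energy realization rather than merely in expectation. Second, turning the mutual-information bound into an operational rate requires a coding theorem for a channel whose input constraints are driven by a causally-known state $E^N$ that the decoder does not observe; justifying that $\tfrac{1}{N}I(M;Y^N)$ is achievable and that the penalty for the unknown state is precisely $\tfrac{1}{T}H(E^{(i)})$ is the technically delicate step, which I would handle by adapting the construction of \cite{ShavivNguyenOzgur2016} to the block structure.
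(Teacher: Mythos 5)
Your proposal is correct and follows essentially the same route as the paper: the paper proves the intermediate bound $\Theta - \tfrac{1}{T}H(E^{(i)}) - \tfrac{1}{2}\log(\tfrac{\pi e}{2}) \leq C \leq \Theta$ (Proposition~\ref{prop:capacity_throughput_connection}) by introducing the capacity $C_{\mathrm{Rx}}$ with receiver-side energy information, bounding $C_{\mathrm{Rx}}-C$ by the entropy rate $\tfrac{1}{T}H(E^{(i)})$ of the side information, and invoking the uniform-input/EPI achievability and Gaussian-maximum-entropy converse of \cite{ShavivNguyenOzgur2016}, then combines this with Theorem~\ref{thm:throughput_bounds} exactly as you do. Your explicit identity $I(M;Y^N)\geq I(M;Y^N\mid E^N)-H(E^N)$ is precisely the content of the side-information bound the paper cites, and your gap accounting $\tfrac{1}{2}\log\tfrac{\pi e}{2}+\tfrac{1}{2}\log e=\tfrac{1}{2}\log\tfrac{\pi e^2}{2}$ matches the paper's.
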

This is the counterpart of Theorem~1 in \cite{ShavivNguyenOzgur2016}, in which the i.i.d.\ case ($T=1$) is considered.
The proof follows similarly, with some modifications to account for a block i.i.d.\ energy arrivals process.
Specifically, we have the following intermediate result.
\begin{proposition}\label{prop:capacity_throughput_connection}
The capacity of the energy harvesting channel with block i.i.d.\ energy arrivals is bounded by
\begin{equation}
\Theta - \tfrac{1}{T}H(E^{(i)}) - \tfrac{1}{2}\log\big(\tfrac{\pi e}{2}\big) \leq C \leq \Theta,
\label{eq:capacity_throughput_connection}
\end{equation}
where $\Theta$ is the optimal throughput defined in \eqref{eq:def_optimal_throughput}.
\end{proposition}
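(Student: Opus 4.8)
The plan is to prove the two inequalities separately, mirroring the i.i.d.\ argument of \cite{ShavivNguyenOzgur2016} while isolating where the block structure enters. The block structure should affect only the \emph{rate at which the arrivals must be described to the decoder}: since the process is block i.i.d., each block reveals a single new value $E^{(i)}$ over $T$ slots, so its entropy rate is $\tfrac1T H(E^{(i)})$, which is exactly the advertised penalty.

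For the upper bound $C\le\Theta$, I would start from Fano's inequality, $NR\le I(M;Y^N)+N\epsilon_N$, and exploit that the message $M$ is independent of the arrivals $E^N$. This gives $I(M;Y^N)\le I(M;Y^N\mid E^N)\le I(X^N;Y^N\mid E^N)$, using $I(M;E^N)=0$ and the Markov chain $M-(X^N,E^N)-Y^N$. Conditioning on a realization $E^N=e^N$ and single-letterizing via $Y_t=X_t+Z_t$ with $Z_t$ Gaussian and independent yields $I(X^N;Y^N\mid E^N=e^N)\le\sum_t\tfrac12\log\big(1+\mathbb{E}[X_t^2\mid E^N=e^N]\big)$. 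The key point is that conditioning on $E^N$ first keeps the average over arrivals \emph{outside} the logarithm, so averaging over $E^N$ and using encoder causality ($X_t$ depends only on $M,E^t$, whence $\mathbb{E}[X_t^2\mid E^N]=\mathbb{E}[X_t^2\mid E^t]$) gives $\tfrac1N I(M;Y^N)\le\tfrac1N\sum_t\mathbb{E}[C(g_t)]=\tfrac1N\mathscr{T}_N(\mathbf g)$ with $g_t:=\mathbb{E}[X_t^2\mid E^t]$. I would then verify $\mathbf g$ is admissible: since $X_t^2\le B_t$, concavity of $x\mapsto\min(x,\bar B)$ together with \eqref{eq:inf_theory_battery} gives $\mathbb{E}[B_t\mid E^t]\le\min(\mathbb{E}[B_{t-1}\mid E^{t-1}]-g_{t-1}+E_t,\bar B)$, so inductively $g_t\le\mathbb{E}[B_t\mid E^t]\le b_t$ where $b_t$ obeys \eqref{eq:battery}. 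Hence $\tfrac1N\mathscr{T}_N(\mathbf g)$ is bounded by the finite-horizon optimum, which tends to $\Theta$, yielding $C\le\Theta$.

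For the lower bound I would split the penalty into its two terms. First, consider the genie-aided channel in which the decoder is also given $E^N$. Conditioned on $E^N$, the power schedule $g_t=g_t(E^t)$ of a near-optimal admissible policy is known at both ends, so transmitting $X_t$ uniform on $[-\sqrt{g_t},\sqrt{g_t}]$ (which respects $X_t^2\le g_t\le b_t$) and applying the entropy power inequality gives $I(X_t;Y_t\mid E^N)\ge\tfrac12\log\big(1+\tfrac{2g_t}{\pi e}\big)\ge C(g_t)-\tfrac12\log\tfrac{\pi e}{2}$, the last step because $\tfrac{1+g}{1+2g/(\pi e)}$ increases from $1$ to $\tfrac{\pi e}{2}$. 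Averaging over $E^N$ and optimizing the policy shows the genie capacity is at least $\Theta-\tfrac12\log\tfrac{\pi e}{2}$. Second, I would remove the genie: because the process is block i.i.d., the decoder only needs the single value $E^{(i)}$ per block of $T$ slots, whose entropy rate is $\tfrac1T H(E^{(i)})$, so compressing $E^N$ to $\approx\tfrac NT H(E^{(i)})$ bits (via the AEP for the source) and reliably conveying them costs $\tfrac1T H(E^{(i)})$ in rate, giving $C\ge\Theta-\tfrac1T H(E^{(i)})-\tfrac12\log\tfrac{\pi e}{2}$.

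I expect the main obstacle to be the second step of the lower bound: making rigorous, under \emph{causal} revelation of $E^t$ and a hard per-symbol amplitude constraint, that the decoder can be informed of the arrivals at a rate cost of exactly the entropy rate $\tfrac1T H(E^{(i)})$. The difficulty is that the transmitter learns $E_t$ only at time $t$, so a naive ``describe the state, then send the data'' scheme is non-causal; one must instead interleave the state description with the message (for instance a block-Markov or layered scheme in which past arrivals are conveyed using a vanishing fraction of the resources), while checking that the amplitude constraints \eqref{eq:inf_theory_amplitude} remain satisfied and that the rate loss concentrates at the entropy rate. The amplitude-gap computation and the admissibility check, by contrast, are routine.
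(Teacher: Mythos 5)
Your overall architecture is right, and two of the three ingredients match the paper's proof. The upper bound via $I(M;Y^N)\leq I(M;Y^N\mid E^N)$ (using $M\perp E^N$), single-letterization with the expectation over $E^N$ kept outside the logarithm, and the admissibility check for $g_t=\mathbb{E}[X_t^2\mid E^t]$ is exactly the route the paper takes (packaged there as $C\leq C_{\mathrm{Rx}}\leq\Theta$, where $C_{\mathrm{Rx}}$ is the capacity with the arrivals revealed to the receiver). Likewise, the genie-aided achievability with uniform inputs on $[-\sqrt{g_t},\sqrt{g_t}]$ and the entropy power inequality, giving $C_{\mathrm{Rx}}\geq\Theta-\tfrac{1}{2}\log\tfrac{\pi e}{2}$, is the same argument the paper imports from \cite[Thm.~4]{ShavivNguyenOzgur2016}.

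The genuine gap is the step you yourself flag: removing the genie. Your plan is constructive --- compress $E^N$ to $\approx\tfrac{N}{T}H(E^{(i)})$ bits and convey them to the decoder --- but this scheme has an unresolved circularity on top of the causality issue you mention: the decoder needs to know the arrivals of a block in order to know the power levels $g_t$ (hence the codebook/input distribution) used for the \emph{data} in that block, yet under your scheme it would learn those arrivals only from a later description, and decoding that later description again requires knowing the later arrivals. Breaking this loop (e.g., with a state-independent low-power signaling layer) while respecting the hard amplitude constraints \eqref{eq:inf_theory_amplitude}--\eqref{eq:inf_theory_battery} is nontrivial and is not done in your sketch. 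The paper sidesteps all of this with a converse-type information inequality rather than a scheme: by \cite[Thm.~1]{Jafar2006}, the capacity gain from providing side information to the \emph{receiver} is at most the entropy rate of that side information, so $C_{\mathrm{Rx}}-C\leq\lim_N\tfrac{1}{N}H(E^N)=\tfrac{1}{T}H(E^{(i)})$ directly, and combining with the genie-aided lower bound finishes the proof. I recommend you replace your compress-and-forward step with this side-information bound (essentially $I(M;Y^N)\geq I(M;Y^N\mid E^N)-H(E^N)$, exploited at the level of the $N$-letter capacity expression), which is both rigorous and immediate.
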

See Appendix~\ref{sec:inf_theory_gap} for the proof.
Clearly, applying Theorem~\ref{thm:throughput_bounds} to \eqref{eq:capacity_throughput_connection} yields Theorem~\ref{thm:capacity_bounds}.
\begin{remark}
The lower bound in \eqref{eq:capacity_throughput_connection} can be tightened as in \cite[Thm.~4]{ShavivNguyenOzgur2016} to 
\[
C \geq \mathscr{T}(\mathbf{g}) - H(\mathbf{g}) - \tfrac{1}{2}\log\big(\tfrac{\pi e}{2}\big),
\]
for any policy $\mathbf{g}$,
where $H(\mathbf{g})=\limsup_{N\to\infty}\tfrac{1}{N}H(g^N(E^N))$ is the entropy rate of the process $\{g_t(E^t)\}_{t=1}^{\infty}$.
This allows choosing policies for which $H(\mathbf{g})$ is bounded by a constant which is independent of the statistics of $E^{(i)}$, thereby making the gap dependent on the entropy rate of the power control process instead of the entropy rate of the energy harvesting process itself. It is shown in~\cite{ShavivNguyenOzgur2016} that, in the i.i.d.\ ($T=1$) case, one can design online power control policies that have constant entropy rate (in particular, $H(\mathbf{g})\leq 1 $ bits per channel use, independent of the distribution of the energy arrivals) and at the same time are within a constant gap to the optimal throughput. This suggests that in the i.i.d. case the information-theoretic capacity of the channel can be approximated by the optimal throughput within a constant gap independent of the parameters of the problem. In the block i.i.d. case, the entropy rate of Policy~\ref{policy} we develop in the previous section can not be simply bounded by a constant independent of the distribution of the energy arrivals. We believe it is possible to modify Policy~\ref{policy} in the lines of \cite{ShavivNguyenOzgur2016} to obtain a policy with a constant entropy rate, and hence show that the information-theoretic capacity and the optimal online throughput are indeed within a constant gap of each other, independent of the parameters of the problem. However, this is less critical for the block i.i.d. case since the entropy rate of the energy arrival process decreases to zero with increasing blocklength $T$. Hence,  we expect the gap term $\tfrac{1}{T}H(E^{(i)})$ in \eqref{eq:capacity_throughput_connection} to be small for sufficiently large values of $T$.
\end{remark}

In the remainder of the paper, we provide the proof of Theorem~\ref{thm:throughput_bounds}.
In Section~\ref{sec:lower_bound} we prove the lower bound on the throughput, and in Section~\ref{sec:upper_bound} we derive an upper bound which differs from the lower bound by no more than 0.72 bits per channel use. 
Section~\ref{sec:conclusion} concludes the paper.

%-------------------%
\section{Lower Bound}
%-------------------%
\label{sec:lower_bound}

In this section we will show that Policy~\ref{policy} achieves a throughput which is lower bounded by $\bar{\Theta}-\frac{1}{2}\log e$ as defined in Theorem~\ref{thm:throughput_bounds}.
This will be done in three parts. In the first part, we will derive the lower bound for the special case of Bernoulli energy arrivals, i.e. $E^{(i)}\in\{0,\bar{E}\}$ for some $\bar{E}>0$.
In the second part, we will generalize this to a larger class of energy arrival distributions, dubbed \emph{semi-Bernoulli} distributions, which maintain the regenerative structure exhibited by Bernoulli energy arrivals.
Finally, in the third part we will show that for any block i.i.d.\ energy arrivals distribution, we can find a modified distribution of the form discussed in the second part, for which the throughput under Policy~\ref{policy} is a lower bound to the throughput under the original distribution. 

%------------------------------------%
\subsection{Bernoulli Energy Arrivals}
%------------------------------------%
\label{subsec:lower_bound_bernoulli}

We start with the special case of Bernoulli energy arrivals, namely
\begin{equation}
E^{(i)} = \begin{cases}
\bar{E} &, \text{w.p. } p_0,\\
0 &, \text{w.p. } 1-p_0,
\end{cases}
\label{eq:bernoulli_distribution}
\end{equation}
for some $0\leq p_0\leq 1$.

We will assume the energy level $\bar{E}$ satisfies 
$E_c\leq\bar{E} \leq \bar{B}$,
 where $E_c$ is the solution to \eqref{eq:def_Ec}.
This implies $\mathbb{E}[\min(E^{(i)}, E_c)]=p_0E_c$, which yields $E_c=\frac{\bar{B}}{p_0+T(1-p_0)}$ and $q=p_0$ from \eqref{eq:def_Ec} and \eqref{eq:def_q}, respectively.
In the following proposition, we show that the lower bound in Theorem~\ref{thm:throughput_bounds} holds for this special case.
\begin{proposition}\label{prop:bernoulli_lower_bound}
Let $E^{(i)}$ be distributed Bernoulli as in \eqref{eq:bernoulli_distribution} with $E_c \leq \bar{E}\leq\bar{B}$. Then the throughput obtained by Policy~\ref{policy} is lower bounded by
\[
\mathscr{T}(\mathbf{g}) \geq \bar{\Theta} - \tfrac{1}{2}\log e.
\]
\end{proposition}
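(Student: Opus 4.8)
The plan is to exploit the regenerative structure induced by the Bernoulli arrivals and reduce the whole bound to the i.i.d.\ Fixed Fraction Policy estimate of Theorem~\ref{thm:onlinePC}. Since $E_c\leq\bar E\leq\bar B$, a block with $E^{(i)}=\bar E$ triggers the large-arrival branch \eqref{eq:large_arrival_policy}, which leaves the battery full, $b_T^{(i)}=\bar B$, irrespective of $b_1^{(i)}$, while a block with $E^{(i)}=0$ triggers the small-arrival branch \eqref{eq:naive_policy}. I would therefore declare the last slot of each large-arrival block (where $b_T=\bar B$) a renewal instant, so that the trajectory decomposes into i.i.d.\ cycles, each consisting of one large-arrival block followed by a geometrically distributed number $m$ of zero-energy blocks, with $\Pr(m=j)=(1-p_0)^j p_0$, cycle length $(m+1)T$, and $\mathbb{E}[(m+1)T]=T/p_0$. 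Recall $q=p_0$ and $E_c=\bar B/(p_0+T(1-p_0))$ for this distribution.

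The computational core is to show that, within the fixed-fraction part of a cycle, the allocated power decays geometrically. Writing $\alpha=\tfrac{q}{q+(1-q)T}$ and using the identity $q+(1-q)T=T-(T-1)q$, I would verify the two relations $\alpha\big(1-\tfrac{(T-1)q}{T}\big)=\tfrac{q}{T}$ and $(1-\alpha)\big(1-\tfrac{(T-1)q}{T}\big)=1-q$. The first shows that in a zero-energy block all $T$ slots carry the \emph{same} power $\tfrac{q}{T}b_1^{(i)}$, as the last slot, governed by $g_T^{(i)}=\alpha b_T^{(i)}$, matches the first $T-1$; the second shows that the head-of-block battery obeys $b_1^{(i+1)}=(1-q)b_1^{(i)}$ across zero-energy blocks. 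Combining these with the value $P_0:=\alpha\bar B=qE_c$ allocated at the renewal slot, the power in the $k$-th zero-energy block is $P_k=(1-q)^k P_0=(1-q)^k qE_c$, contributing reward $T\,C(P_k)$.

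Next I would assemble the cycle reward. The renewal slot contributes $C(P_0)$; the zero-energy blocks contribute $\sum_{k=1}^m T\,C(P_k)$; and the first $T-1$ slots of the \emph{following} large-arrival block contribute $(T-1)C(g_1)$ with $g_1=\bar E-\tfrac{\bar B-b_1}{T-1}$. The only history-dependent piece is this last one, but since the just-harvested $\bar E$ forces $b_1\geq\bar E$, monotonicity of $C$ gives $C(g_1)\geq C(G_1)$ with $G_1:=\bar E-\tfrac{\bar B-\bar E}{T-1}$, exactly the argument appearing in the first term of \eqref{eq:theta_bar_general}. Applying the renewal--reward theorem, using $\mathbb{E}[\sum_{k=1}^m C(P_k)]=\sum_{k\geq1}(1-q)^k C(P_k)$ (as $\Pr(m\geq k)=(1-q)^k$), and dividing by $T/q$ yields
\[
\mathscr{T}(\mathbf{g})\ \geq\ \tfrac{q(T-1)}{T}C(G_1)+\tfrac{q}{T}C(qE_c)+q\sum_{k=1}^\infty (1-q)^k C\big((1-q)^k qE_c\big).
\]

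Finally, I would subtract $\bar\Theta$ from \eqref{eq:theta_bar_general}, specialized via $p=p_0=q$ and $\mathbb{E}[\min(E^{(i)},E_c)]=qE_c$. The $C(G_1)$ terms cancel, and after routine algebra (extending the geometric sum to include its $k=0$ term) the claim collapses to the single scalar inequality
\[
q\sum_{k=0}^\infty (1-q)^k C\big((1-q)^k qE_c\big)\ \geq\ C(qE_c)-\tfrac12\log e.
\]
This is precisely the throughput lower bound of Theorem~\ref{thm:onlinePC} applied to an i.i.d.\ Bernoulli source on $\{0,E_c\}$ with battery $E_c$ (for which the Fixed Fraction parameter is again $q$), so it holds verbatim. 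I expect the main obstacle to be bookkeeping rather than conceptual: establishing the two algebraic identities that force constant within-block power and geometric battery decay, and assigning each slot to exactly one renewal cycle so that the renewal--reward averaging is exact. Once the reduction to the displayed scalar inequality is in place, Theorem~\ref{thm:onlinePC} closes the argument.
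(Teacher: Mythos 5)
Your argument is essentially the paper's proof: the same renewal--reward decomposition with regeneration at positive arrivals, the same use of $b_1^{(i)}\geq E^{(i)}$ to decouple the large-arrival reward from the cycle history, and the same geometric decay $b_1^{(i+1)}=(1-q)b_1^{(i)}$ of the head-of-block battery (your two identities for $\alpha=\tfrac{q}{q+(1-q)T}$ are exactly what makes the paper's formula $b_1^{(i)}=(1-p_0)^{i-1}TE_c$ work). The only real difference is the last step: you close the scalar inequality $q\sum_{k\geq0}(1-q)^kC\big((1-q)^kqE_c\big)\geq C(qE_c)-\tfrac12\log e$ by recognizing the left side as the Fixed Fraction Policy throughput for an i.i.d.\ Bernoulli source on $\{0,E_c\}$ with battery $E_c$ and invoking Theorem~\ref{thm:onlinePC}, whereas the paper re-derives it directly from $\log(1+\alpha x)\geq\log(1+x)+\log\alpha$ and $\tfrac{1-q}{2q}\log(1-q)\geq-\tfrac12\log e$. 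That is a clean repackaging --- it makes explicit that the block-i.i.d.\ Bernoulli tail is literally the i.i.d.\ problem on the reduced effective battery $E_c$ --- but it is the identical computation underneath.

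One gap to patch: the proposition allows $\bar E=E_c$, and there your premise that a positive arrival ``triggers the large-arrival branch \eqref{eq:large_arrival_policy}'' fails, since Policy~\ref{policy} applies the small-arrival branch \eqref{eq:naive_policy} when $E^{(i)}=E_c$; moreover $p=\Pr(E^{(i)}>E_c)=0$, so the $C(G_1)$ term is absent from $\bar\Theta$ and your cancellation step does not literally apply. The fix is short --- one checks that $\big(1-\tfrac{q(T-1)}{T}\big)(b_1^{(i)}+(T-1)E_c)\geq\bar B$ whenever $b_1^{(i)}\geq E_c$, so the battery is still fully recharged at the end of every positive-arrival block and the same renewal structure, and hence the same scalar inequality, carries through with $\bar\Theta=C(qE_c)$ --- but it does need to be said, and the paper handles it as a separate case.
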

\begin{proof}
Assume for now $\bar{E} > E_c$; the special case $\bar{E} = E_c$ will be treated afterwards.
We have $p=\Pr(E^{(i)}>E_c)=p_0$, and the approximate throughput in \eqref{eq:theta_bar_general} is given by
\begin{equation}
\bar{\Theta} = \tfrac{p_0(T-1)}{T}C(\bar{E}-\tfrac{\bar{B}-\bar{E}}{T-1}) + \tfrac{p_0+T(1-p_0)}{T}C(p_0 E_c).
\label{eq:theta_bar_bernoulli_large}
\end{equation}

It can be observed that Policy~\ref{policy} reduces to
\begin{align*}
g_1^{(i)} &= 
\begin{cases}
	\tfrac{p_0}{T}b_1^{(i)}, &\text{ if }E^{(i)}=0,\\
	\bar{E}-\tfrac{\bar{B}-b_1^{(i)}}{T-1}, &\text{ if }E^{(i)}=\bar{E},
\end{cases}\\
g_T^{(i)} &= \tfrac{p_0}{p_0+(1-p_0)T}b_T^{(i)}.
\end{align*}
We see that whenever there is a positive energy arrival, the battery will be fully charged to $\bar{B}$ by the end of the block.

Consider the Markov reward process \cite[Ch.~8.2]{Puterman2005} obtained by applying Policy~\ref{policy}. This comprises of the state process $(b_1^{(i)}, E^{(i)})$ and a reward function given by $r(b_1^{(i)},E^{(i)})=\tfrac{T-1}{T}C(g_1^{(i)})+\tfrac{1}{T}C(g_T^{(i)})$.
It can be verified that the reward function is given by
\[
r(b_1^{(i)},E^{(i)}) = 
\begin{cases}
	C(\tfrac{p_0}{T}b_1^{(i)}), 
	&\text{ if }E^{(i)}=0,\\
	\lefteqn{\tfrac{T-1}{T}C(\bar{E}-\tfrac{\bar{B}-b_1^{(i)}}{T-1})+\tfrac{1}{T}C(p_0E_c),}\\
	\hspace{110pt} &\text{ if }E^{(i)}=\bar{E}.
\end{cases}
\]
The battery state evolves according to
\begin{align}
b_1^{(i+1)} &= \min(b_T^{(i)}-g_T^{(i)}+E^{(i+1)}, \bar{B}),\nonumber\\
b_T^{(i)}-g_T^{(i)} &= 
\begin{cases}
	(1-p_0)b_1^{(i)}, &\text{ if }E^{(i)}=0,\\
	T(1-p_0)E_c, &\text{ if }E^{(i)}=\bar{E}.
\end{cases}
\label{eq:bernoulli_battery_evolution}
\end{align}

In what follows, we will lower bound the throughput obtained by Policy~\ref{policy} by analyzing the average long-term reward under the following reward process:
\[
\tilde{r}(b_1^{(i)},E^{(i)}) = 
\begin{cases}
	C(\tfrac{p_0}{T}b_1^{(i)}), &\text{if }E^{(i)}=0,\\
	\tfrac{T-1}{T}C(\bar{E}-\tfrac{\bar{B}-\bar{E}}{T-1})+\tfrac{1}{T}C(p_0 E_c), &\text{if }E^{(i)}=\bar{E}.
\end{cases}
\]
Note that $r(b_1^{(i)},E^{(i)})\geq \tilde{r}(b_1^{(i)},E^{(i)})$ since $b_1^{(i)}\geq E^{(i)}$.
Hence the throughput obtained by Policy~\ref{policy} is lower bounded by
\begin{align}
\mathscr{T}(\mathbf{g})
&= \liminf_{N\to\infty}\tfrac{1}{N}\sum_{i=1}^{N}r(b_1^{(i)},E^{(i)})\nonumber\\
&\geq \liminf_{N\to\infty}\tfrac{1}{N}\sum_{i=1}^{N}\tilde{r}(b_1^{(i)},E^{(i)}).\label{eq:markov_reward_process_lower_bound}
\end{align}

Observe that the process $\tilde{r}(b_1^{(i)},E^{(i)})$ is a \emph{regenerative} process, where regeneration occurs whenever $E^{(i)}=\bar{E}$. 
Then, by the renewal reward theorem
(see e.g. Theorem 3.1 in~\cite[Ch.~VI]{asmussen2008applied}, \cite[Prop.~7.3]{ross2014introduction}, or~\cite[Lemma~1]{NearOptimal1}), eq.~\eqref{eq:markov_reward_process_lower_bound} becomes:
\begin{align}
\mathscr{T{}}(\mathbf{g})
&\geq\frac{1}{\mathbb{E}L}\mathbb{E}\left[\sum_{i=1}^{L}
\tilde{r}({b}_1^{(i)},E^{(i)})\right],
\label{eq:bernoulli_throughput_first_lower_bound}
\end{align}
where $E^{(1)}=\bar{E}$, $E^{(i)}=0$ for $i>1$, and $L$ is a Geometric RV with parameter $p_0$, representing the number of blocks between consecutive positive energy arrivals.
From \eqref{eq:bernoulli_battery_evolution}, it follows that $b_1^{(i)}=(1-p_0)^{i-1}TE_c$ for $i\geq 2$, and therefore \eqref{eq:bernoulli_throughput_first_lower_bound} becomes
\begin{equation}
\begin{aligned}
\mathscr{T}(\mathbf{g})
&\geq \frac{1}{\mathbb{E}L}\mathbb{E}\Big[
\tfrac{T-1}{T}C(\bar{E}-\tfrac{\bar{B}-\bar{E}}{T-1})+\tfrac{1}{T}C(p_0 E_c)\\*
&\hspace{70pt}
+\sum_{i=2}^{L}C(p_0(1-p_0)^{i-1}E_c)
\Big].
\end{aligned}
\label{eq:bernoulli_throughput_second_lower_bound}
\end{equation}
Using the fact that 
$\log(1+\alpha x) \geq \log(1+x) + \log\alpha$
for $0\leq \alpha\leq1$, we can lower bound
\[
C(p_0(1-p_0)^{i-1}E_c) \geq C(p_0 E_c) + \tfrac{i-1}{2}\log(1-p_0).
\]
Substituting in \eqref{eq:bernoulli_throughput_second_lower_bound}:
\begin{align*}
\mathscr{T}(\mathbf{g})
&\geq \frac{1}{\mathbb{E}L}\mathbb{E}\Big[
\tfrac{T-1}{T}C(\bar{E}-\tfrac{\bar{B}-\bar{E}}{T-1})+\tfrac{1}{T}C(p_0 E_c)\\*
&\hspace{50pt}
+(L-1)C(p_0 E_c)+\tfrac{L^2-L}{4}\log(1-p_0)
\Big]\\
&= \tfrac{p_0(T-1)}{T}C(\bar{E}-\tfrac{\bar{B}-\bar{E}}{T-1})
+(\tfrac{p_0}{T}+1-p_0)C(p_0E_c)\\*
&\qquad
+\tfrac{1-p_0}{2p_0}\log(1-p_0)\\
&\geq \bar{\Theta} - \tfrac{1}{2}\log e,
\end{align*}
where the last step is due to \eqref{eq:theta_bar_bernoulli_large} and because $\tfrac{1-p}{2p}\log(1-p) \geq -\tfrac{1}{2}\log e$ for $0\leq p\leq 1$.

It remains to show the lower bound holds for $\bar{E} = E_c$. In this case, we have $p = \Pr(E^{(i)} > E_c) = 0$, and the approximate throughput \eqref{eq:theta_bar_general} is
\begin{equation}
	\bar{\Theta} = C(p_0 E_c).
\end{equation}
Policy~\ref{policy} takes the form
\begin{align*}
g_1^{(i)} &= 
\begin{cases}
	\tfrac{p_0}{T}b_1^{(i)}, &\text{ if }E^{(i)}=0,\\
	\tfrac{p_0}{T}(b_1^{(i)}+(T-1)E_c), &\text{ if }E^{(i)}=E_c,
\end{cases}\\
g_T^{(i)} &= \tfrac{p_0}{p_0+(1-p_0)T}b_T^{(i)}.
\end{align*}
Observe that here as well, whenever there is a positive energy arrival the battery will be fully charged to $\bar{B}$ by the end of the block. This can be seen by computing the battery state at the end of the block when $E^{(i)}=E_c$:
\begin{align*}
b_T^{(i)} &= \min\left\{(1-\tfrac{p_0(T-1)}{T})(b_1^{(i)}+(T-1)E_c),\ \bar{B}\right\}.
\end{align*}
Since $b_1^{(i)} \geq E^{(i)} = E_c$:
\begin{align*}
(1-\tfrac{p_0(T-1)}{T})(b_1^{(i)}+(T-1)E_c)
&\geq (1-\tfrac{p_0(T-1)}{T})TE_c\\
&= \bar{B}.
\end{align*}
Therefore the battery state evolves exactly the same as in the case $\bar{E} > E_c$.

The reward function is given by
\[
r(b_1^{(i)}, E^{(i)}) =
\begin{cases}
	C(\tfrac{p_0}{T}b_1^{(i)}), &\text{ if }E^{(i)}=0,\\
	\lefteqn{\tfrac{T-1}{T}C(\tfrac{p_0}{T}(b_1^{(i)}+(T-1)E_c)) + \tfrac{1}{T}C(p_0 E_c),}\\
	\hspace{115pt} &\text{ if }E^{(i)}=E_c.
\end{cases}
\]
As before, we will lower bound the reward function using the fact that $b_1^{(i)} \geq E^{(i)}$, giving
\[
\tilde{r}(b_1^{(i)} , E^{(i)}) =
\begin{cases}
	C(\tfrac{p_0}{T}b_1^{(i)}), &\text{ if }E^{(i)}=0,\\
	C(p_0 E_c), &\text{ if }E^{(i)}=E_c.\\
\end{cases}
\]
Repeating the previous steps, we obtain
\begin{align*}
\mathscr{T}(\mathbf{g})
&\geq C(p_0E_c)
+\tfrac{1-p_0}{2p_0}\log(1-p_0)\\
&\geq \bar{\Theta} - \tfrac{1}{2}\log e.
\end{align*}
\end{proof}

%-----------------------------------------%
\subsection{Semi-Bernoulli Energy Arrivals}
%-----------------------------------------%
\label{subsec:lower_bound_semi_bernoulli}

%------------------------------%
%--- Double Column Equation ---%
%------------------------------%
\begin{figure*}[!t]
\normalsize
\setcounter{mytempeqncnt}{\value{equation}}
\setcounter{equation}{33}
\begin{align}
\bar{\Theta} &= \tfrac{p(T-1)}{T}\mathbb{E}\left[\left.C\left(\min\left\{E^{(i)}-\tfrac{\bar{B}-E^{(i)}}{T-1},\bar{B}\right\}\right)\right|E^{(i)}>E_c\right]
	+ \tfrac{p+T(1-p)}{T}C(q E_c)\label{eq:theta_bar_semi_bernoulli_alt}\\
&= \tfrac{q(T-1)}{T}\mathbb{E}\left[\left.C\left(\min\left\{E^{(i)}-\tfrac{\bar{B}-E^{(i)}}{T-1},\bar{B}\right\}\right)\right|E^{(i)}\geq E_c\right]
	 -\tfrac{T-1}{T}\Pr(E^{(i)}=E_c)\cdot C(E_c-\tfrac{\bar{B}-E_c}{T-1})
	+\tfrac{p+T(1-p)}{T}C(q E_c)\\
&= \tfrac{q(T-1)}{T}\mathbb{E}\left[\left.C\left(\min\left\{E^{(i)}-\tfrac{\bar{B}-E^{(i)}}{T-1},\bar{B}\right\}\right)\right|E^{(i)}\geq E_c\right]
	+\tfrac{T-(T-1)p-(T-1)\Pr(E^{(i)}=E_c)}{T} C(q E_c)
	\label{eq:theta_bar_semi_bernoulli_step}\\
&= \tfrac{q(T-1)}{T}\mathbb{E}\left[\left.C\left(\min\left\{E^{(i)}-\tfrac{\bar{B}-E^{(i)}}{T-1},\bar{B}\right\}\right)\right|E^{(i)}\geq E_c\right]
	+\tfrac{q+T(1-q)}{T} C(q E_c),
	\label{eq:theta_bar_semi_bernoulli}
\end{align}
\setcounter{equation}{\value{mytempeqncnt}}
\hrulefill
\vspace*{4pt}
\end{figure*}

Now consider an energy arrivals process $E^{(i)}$, and let $E_c$ be the solution to \eqref{eq:def_Ec}.
We say that $E^{(i)}$ has a \emph{semi-Bernoulli} distribution if 
\begin{equation}
\Pr(0 < E^{(i)} < E_c) = 0.
\end{equation}
In other words, $E^{(i)}$ can either be 0 or take a value which is at least $E_c$, but not any value in the open interval $(0, E_c)$.
Note that the Bernoulli distribution from the previous section, namely $E^{(i)}\in\{0,\bar{E}\}$ with $E_c\leq\bar{E}$, clearly satisfies this condition; however, in general, $E^{(i)}$ can take an arbitrary number of values.

Observe that $\mathbb{E}[\min(E^{(i)}, E_c)]=E_c\cdot\Pr(E^{(i)}\geq E_c)$, which from \eqref{eq:def_q} yields $q=\Pr(E^{(i)}\geq E_c)$.
Denote $p=\Pr(E^{(i)}>E_c)$ as in Theorem~\ref{thm:throughput_bounds}.
Note that $q=p+\Pr(E^{(i)}=E_c)$.
The approximate throughput $\bar{\Theta}$ in \eqref{eq:theta_bar_general} is given by \eqref{eq:theta_bar_semi_bernoulli_alt}--\eqref{eq:theta_bar_semi_bernoulli} at the top of the page,
where \eqref{eq:theta_bar_semi_bernoulli_step} is due to \eqref{eq:def_Ec_alt}.
\addtocounter{equation}{4}
The following proposition generalizes the result of Proposition~\ref{prop:bernoulli_lower_bound} to semi-Bernoulli distributions.
\begin{proposition}\label{prop:semi_bernoulli_lower_bound}
Suppose $E^{(i)}$ and $E_c$ satisfy $\Pr(0 < E^{(i)} < E_c) = 0$.
Then the throughput obtained by Policy~\ref{policy} is lower bounded by
\[
\mathscr{T}(\mathbf{g}) \geq \bar{\Theta} - \tfrac{1}{2}\log e.
\]
\end{proposition}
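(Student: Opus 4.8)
The plan is to mirror the three-part structure used for the Bernoulli case (Proposition~\ref{prop:bernoulli_lower_bound}), exploiting the fact that a semi-Bernoulli distribution retains the same regenerative behavior: every block with a positive energy arrival resets the battery. The crucial observation is that, since $\Pr(0 < E^{(i)} < E_c) = 0$, a positive arrival necessarily satisfies $E^{(i)} \geq E_c$, and any such block fully recharges the battery to $\bar{B}$ at its last time slot; more importantly, it leaves a \emph{deterministic} carryover of $T(1-q)E_c$ into the next block, \emph{independent} of the particular value that $E^{(i)} \geq E_c$ took. This is what decouples the (now variable) reward earned in a regeneration block from the subsequent deterministic decay, so that the renewal-reward argument goes through much as before.

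First I would verify the regenerative structure. For $E^{(i)} > E_c$ the large-arrival branch \eqref{eq:large_arrival_policy} is designed so that $b_T^{(i)} = \bar{B}$; for the boundary value $E^{(i)} = E_c$ one checks directly, using $b_1^{(i)} \geq E_c$ together with $\bar{B} = E_c(q + T(1-q))$ from \eqref{eq:def_Ec_alt}, that the naive branch \eqref{eq:naive_policy} also yields $b_T^{(i)} = \bar{B}$. In both cases $g_T^{(i)} = qE_c$ and the carryover $b_T^{(i)} - g_T^{(i)} = T(1-q)E_c$. During a run of zero-energy blocks the fixed-fraction rule gives $g_1^{(i)} = g_T^{(i)} = \tfrac{q}{T}b_1^{(i)}$, so the battery decays geometrically and $b_1^{(i)} = (1-q)^{i-1}T E_c$ for the $i$-th block after a regeneration, exactly as in the Bernoulli case.

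Next I would lower-bound the reward by a quantity $\tilde{r}$ that, on a regeneration block, depends only on $E^{(i)}$ and not on the incoming battery $b_1^{(i)}$. Using $b_1^{(i)} \geq E^{(i)}$, on a block with $E^{(i)} \geq E_c$ I replace $g_1^{(i)}$ by $\min\{E^{(i)} - \tfrac{\bar{B}-E^{(i)}}{T-1}, \bar{B}\}$, giving $\tilde{r} = \tfrac{T-1}{T}C(\min\{E^{(i)} - \tfrac{\bar{B}-E^{(i)}}{T-1}, \bar{B}\}) + \tfrac{1}{T}C(qE_c)$; on a zero-energy block $\tilde{r} = C(\tfrac{q}{T}b_1^{(i)})$ is already exact. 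Under $\tilde{r}$ the regeneration reward is a function of the i.i.d.\ value $E^{(i)} \mid E^{(i)} \geq E_c$ and the decay is deterministic, so the induced epochs are i.i.d.\ and the renewal-reward theorem gives $\mathscr{T}(\mathbf{g}) \geq \tfrac{1}{\mathbb{E}L}\mathbb{E}[\sum_{i=1}^{L}\tilde{r}]$ with $L$ geometric of parameter $q$. Evaluating the sum --- the regeneration block contributing the conditional expectation of $C(\min\{\cdots\})$ and the decaying zero-energy blocks contributing $C(q(1-q)^{i-1}E_c)$ --- then applying $\log(1+\alpha x) \geq \log(1+x) + \log\alpha$ to peel off $C(q(1-q)^{i-1}E_c) \geq C(qE_c) + \tfrac{i-1}{2}\log(1-q)$, and using $\mathbb{E}L = 1/q$ and $\mathbb{E}[L^2 - L] = 2(1-q)/q^2$, collapses everything into $\bar{\Theta} + \tfrac{1-q}{2q}\log(1-q)$ with $\bar{\Theta}$ in the form \eqref{eq:theta_bar_semi_bernoulli}. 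Since $\tfrac{1-q}{2q}\log(1-q) \geq -\tfrac{1}{2}\log e$ for $0 \leq q \leq 1$, the proof concludes.

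The step I expect to be the main obstacle is the generalization itself: unlike the Bernoulli case, the regeneration block now carries a \emph{random} reward, so I must argue carefully that the constancy of the carryover $T(1-q)E_c$ across \emph{all} values $E^{(i)} \geq E_c$ is precisely what renders the epochs i.i.d.\ and allows the renewal-reward theorem to apply without modification. A secondary subtlety is the boundary value $E^{(i)} = E_c$, which is served by the naive branch of Policy~\ref{policy} rather than the large-arrival branch: I need to confirm both that it still fully recharges the battery and that $\min\{E_c - \tfrac{\bar{B}-E_c}{T-1}, \bar{B}\} = qE_c$, so that the conditioning on $\{E^{(i)} \geq E_c\}$ rather than $\{E^{(i)} > E_c\}$ in \eqref{eq:theta_bar_semi_bernoulli} is exactly what the reward accounting reproduces.
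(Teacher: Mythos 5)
Your proposal is correct and follows essentially the same route as the paper's proof: lower-bound the reward by a version $\tilde{r}$ that depends on $E^{(i)}$ rather than $b_1^{(i)}$ on positive-arrival blocks, exploit the regeneration at $\{E^{(i)}\geq E_c\}$ with deterministic carryover $T(1-q)E_c$, apply the renewal--reward theorem with $L$ geometric of parameter $q$, and use the rewriting of $\bar{\Theta}$ in the form \eqref{eq:theta_bar_semi_bernoulli}. The two subtleties you flag (the random reward in the regeneration block being decoupled by the constant carryover, and the boundary case $E^{(i)}=E_c$ giving $\min\{E_c-\tfrac{\bar{B}-E_c}{T-1},\bar{B}\}=qE_c$) are exactly the points the paper's derivation relies on.
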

\begin{proof}
Observe that Policy~\ref{policy} reduces to
\begin{align*}
g_1^{(i)} &= 
\begin{cases}
	\tfrac{q }{T}b_1^{(i)}, &\text{ if }E^{(i)} = 0,\\
	\tfrac{q }{T}(b_1^{(i)}+(T-1)E_c), &\text{ if }E^{(i)}=E_c,\\
	E^{(i)} - \tfrac{\bar{B} - b_1^{(i)}}{T-1}, &\text{ if }E_c < E^{(i)} \leq \bar{B},	\\
	\bar{B}, &\text{ if }\bar{B} < E^{(i)},
\end{cases}\\
g_T^{(i)} &= \tfrac{q }{q +(1-q )T}b_T^{(i)}.
\end{align*}
As in the previous section, whenever there is a positive energy arrival (or equivalently $E^{(i)} \geq E_c$), the battery will be fully charged to $\bar{B}$ by the end of the block.
Taking the appropriate Markov reward process with reward function
\[
r(b_1^{(i)}, E^{(i)}) = 
\begin{cases}
	C(\tfrac{q }{T}b_1^{(i)}), 
		&\text{ if }E^{(i)}=0,\\
	\lefteqn{\tfrac{T-1}{T}C(\tfrac{q }{T}(b_1^{(i)}+(T-1)E_c))+\tfrac{1}{T}C(q E_c),}\\
	 	&\text{ if }E^{(i)}=E_c,\\
	\lefteqn{\tfrac{T-1}{T}C(E^{(i)}-\tfrac{\bar{B}-b_1^{(i)}}{T-1})+\tfrac{1}{T}C(q  E_c),}\\
	 	&\text{ if }E_c < E^{(i)} \leq \bar{B},\\
	\tfrac{T-1}{T}C(\bar{B}) + \tfrac{1}{T}C(q  E_c), 
		\hspace{10pt}
		&\text{ if }\bar{B} < E^{(i)},
\end{cases}
\]
we can again obtain a lower bound on the reward function by using the relation $b_1^{(i)} \geq E^{(i)}$ for the appropriate cases (specifically when $E^{(i)}>0$):
\[
\tilde{r}(b_1^{(i)},E^{(i)}) =
\begin{cases}
	C(\tfrac{q }{T}b_1^{(i)}), 
		&\text{ if }E^{(i)}=0,\\
	C(q E_c), 
		&\text{ if }E^{(i)}=E_c,\\
	\lefteqn{\tfrac{T-1}{T}C(E^{(i)}-\tfrac{\bar{B}-E^{(i)}}{T-1})+\tfrac{1}{T}C(q E_c),} \\
		&\text{ if }E_c<E^{(i)}\leq\bar{B},\\
	\tfrac{T-1}{T}C(\bar{B})+\tfrac{1}{T}C(q E_c), 
		\hspace{10pt}
		&\text{ if }E^{(i)} > \bar{B}.
\end{cases}
\]
This can be written succinctly as follows:
\[
\tilde{r}(b_1^{(i)} ,E^{(i)}) =
\begin{cases}
	C(\tfrac{q }{T}b_1^{(i)}), 
		&\text{ if }E^{(i)}=0,\\
	\lefteqn{\tfrac{T-1}{T}C(\min\{E^{(i)}-\tfrac{\bar{B}-E^{(i)}}{T-1},\bar{B}\})+\tfrac{1}{T}C(q  E_c),}\\
		\hspace{110pt}
	 	&\text{ if }E^{(i)} \geq E_c.
\end{cases}
\]

The process $\tilde{r}(b_1^{(i)}, E^{(i)})$ is regenerative, with regeneration occurring at the event $E^{(i)} \geq E_c$.
The renewal reward theorem takes the form
\begin{align*}
\lefteqn{\liminf_{N\to\infty}\frac{1}{N}\sum_{i=1}^{N}\tilde{r}(b_1^{(i)},E^{(i)})}\\
& = 
\frac{1}{\mathbb{E}L}\mathbb{E}\Big[\sum_{i=1}^{L}\tilde{r}(b_1^{(i)},E^{(i)})\Big| E^{(1)} \geq E_c,\ E^{(i)} = 0,\ i\geq 2 \Big].
\end{align*}
The derivation carries on almost identically to the proof of Proposition~\ref{prop:bernoulli_lower_bound}, where the only difference is in the first term of \eqref{eq:bernoulli_throughput_second_lower_bound}.
We obtain \eqref{eq:semi_bernoulli_throughput_first}--\eqref{eq:semi_bernoulli_throughput_final} at the top of the page,
where \eqref{eq:semi_bernoulli_throughput_final} is due to \eqref{eq:theta_bar_semi_bernoulli}.
\addtocounter{equation}{3}
\end{proof}

%------------------------------------------------%
\subsection{General Block i.i.d.\ Energy Arrivals}
%------------------------------------------------%
\label{subsec:lower_bound_general}

%------------------------------%
%--- Double Column Equation ---%
%------------------------------%
\begin{figure*}[!t]
\normalsize
\setcounter{mytempeqncnt}{\value{equation}}
\setcounter{equation}{37}
\begin{align}
\mathscr{T}(\mathbf{g})
&\geq \frac{1}{\mathbb{E}L}\mathbb{E}\left[\left.\tfrac{T-1}{T}C\left(\min\left\{E^{(1)}-\tfrac{\bar{B}-E^{(1)}}{T-1},\bar{B}\right\}\right)  +\tfrac{1}{T}C(q  E_c)\right|E^{(1)}\geq E_c\right] 
	+\frac{1}{\mathbb{E}L}\mathbb{E}\left[\sum_{i=2}^{L}C(q (1-q )^{i-1}E_c)\right]
	\label{eq:semi_bernoulli_throughput_first}\\
&\geq \tfrac{q (T-1)}{T}\mathbb{E}\left[\left.C\left(\min\left\{E^{(i)}-\tfrac{\bar{B}-E^{(i)}}{T-1},\bar{B}\right\}\right)\right|E^{(i)}\geq E_c\right] + \tfrac{q +T(1-q )}{T}C(q  E_c) - \tfrac{1}{2}\log e\\
&= \bar{\Theta} - \tfrac{1}{2}\log e,
	\label{eq:semi_bernoulli_throughput_final}
\end{align}
\setcounter{equation}{\value{mytempeqncnt}}
\hrulefill
\vspace*{4pt}
\end{figure*}

Now consider an arbitrary distribution of energy arrivals $E^{(i)}$.
We lower bound the throughput obtained by Policy~\ref{policy} using a technique inspired by \cite{NearOptimal1}. There, the throughput obtained by the proposed policy was lower bounded by showing that the Bernoulli harvesting process yields the \emph{worst} performance compared to all other i.i.d.\ processes with the same mean.
Accordingly, we suggest a mean-preserving modification to the energy arrival distribution; specifically, the modified distribution will be semi-Bernoulli, as defined in the previous section.
We then show that the throughput obtained by Policy~\ref{policy} under this modified distribution is lower than under the original distribution.
Subsequently, the throughput obtained under the modified harvesting process is readily lower bounded using Proposition~\ref{prop:semi_bernoulli_lower_bound}.

We begin by defining the \emph{modified} energy arrival process:
\begin{equation}
\hat{E}^{(i)} := W\cdot 1\{E^{(i)}\leq E_c\} + E^{(i)}\cdot 1\{E^{(i)}>E_c\},
\label{eq:def_Ehat}
\end{equation}
where $1\{\,\cdot\,\}$ is the indicator function, and
$W$ is a Bernoulli RV independent of $E^{(i)}$, with $W\in\{0,E_c\}$.
To make sure the mean is preserved, i.e. $\mathbb{E}[\hat{E}^{(i)}]=\mathbb{E}[E^{(i)}]$, we write
\begin{align*}
\mathbb{E}[E^{(i)}\cdot 1\{E^{(i)}\leq E_c\}]
&= \mathbb{E}[W\cdot 1\{E^{(i)}\leq E_c\}]\\
&= \Pr(W=E_c)\cdot E_c\cdot \Pr(E^{(i)}\leq E_c),
\end{align*}
giving
\begin{equation}
\Pr(W=E_c) = \frac{\mathbb{E}[E^{(i)} | E^{(i)} \leq E_c]}{E_c}.
\label{eq:def_W}
\end{equation}
With this distribution of $W$, the probability of positive energy arrival $\hat{E}^{(i)}$ is given by:
\begin{align}
\Pr(\hat{E}^{(i)}>0)
&= \Pr(\hat{E}^{(i)} \geq E_c)\nonumber\\
&= (1-p) \Pr(W=E_c) + p\nonumber\\
&= \frac{\mathbb{E}[E^{(i)}\cdot 1\{E^{(i)}\leq E_c\}]}{E_c} + \mathbb{E}[1\{E^{(i)}>E_c\}]\nonumber\\
&= \frac{\mathbb{E}[\min(E^{(i)},E_c)]}{E_c}\nonumber\\
&= q.\label{eq:prob_Ehat_positive}
\end{align}

In what follows, we will analyze the long-term average throughput obtained by Policy~\ref{policy} under the original harvesting process $E^{(i)}$, as well as under the modified process $\hat{E}^{(i)}$.
As before, we consider the Markov reward process induced by applying Policy~\ref{policy}.
It would be convenient to describe the system using simpler state variables, defined below:
\begin{align}
x_i&=\frac{b_1^{(i)}+(T-1)E^{(i)}}{T},\\
s_i&=1\big\{E^{(i)}>E_c\big\}.
\end{align}
Using these new state variables, Policy~\ref{policy} can be expressed as follows:
\begin{align*}
g_1^{(i)} &= \begin{cases}
	q x_i, &\text{ if }s_i=0,\\
	\min\big\{\tfrac{Tx_i-\bar{B}}{T-1}, \bar{B}\big\}, &\text{ if }s_i=1,
\end{cases}\\
g_T^{(i)} &= \begin{cases}
	q \cdot \min(x_i, E_c), &\text{ if }s_i=0,\\
	q E_c, &\text{ if }s_i=1.
\end{cases}
\end{align*}
Accordingly, the reward function is given by
\begin{equation}
r(x_i,s_i)=\begin{cases}
\tfrac{T-1}{T}C(qx_i)+\tfrac{1}{T}C(q\cdot\min(x_i,E_c)), &\text{ if }s_i=0,\\
\frac{T-1}{T}C\left(\min\left\{\frac{Tx_i-\bar{B}}{T-1},\bar{B}\right\}\right)
+\frac{1}{T}C\left(qE_c\right)
&,\text{if }s_i=1,
\end{cases}
\label{eq:reward}
\end{equation}
and the state dynamics are given by\footnote{While the state variables were changed from $(b_1^{(i)}, E^{(i)})$ to $(x_i,s_i)$, the disturbance is still $E^{(i+1)}$, which is independent of the current state.}
\begin{align}
x_{i+1}&=\begin{cases}
\lefteqn{\min\big\{(1-q)\min(x_i,E_c)+E^{(i+1)},\,\tfrac{\bar{B}+(T-1)E^{(i+1)}}{T}\big\},}\\
&\text{if }s_i=0,\\
\lefteqn{\min\big\{(1-q)E_c+E^{(i+1)},\,\tfrac{\bar{B}+(T-1)E^{(i+1)}}{T}\big\},}\\
\hspace{160pt}&\text{if }s_i=1,
\end{cases}\label{eq:x_state_dynamics}\\
s_{i+1}&=1\big\{E^{(i+1)}>E_c\big\}.\label{eq:s_state_dynamics}
\end{align}

Define the $N$-horizon total throughput obtained by Policy~\ref{policy}, when the initial state is $(x,s)$:
\begin{equation}
J_N(x,s) = \mathbb{E}\left[\left.\sum_{i=1}^{N}r(x_i,s_i)\right|x_1=x, s_1=s\right].
\end{equation}
% \begin{equation}
% \Theta_N(x,s)=\mathbb{E}\left[\left.\sum_{i=1}^{N}r(x_i,s_i)\right|x_1=x,s_1=s\right].
% \end{equation}
The long-term average expected throughput obtained by Policy~\ref{policy} is given by 
\[
\mathscr{T}(\mathbf{g}) = \liminf_{N\to\infty}\frac{1}{N}J_N(x,s),
\]
for any $x\in[0,\bar{B}]$, $s\in\{0,1\}$.
% We are interested in the long-term average throughput, which is given by $\Theta=\liminf_{N\to\infty}\frac{1}{N}\Theta_N(x,s)$, for any $x\in[0,\bar{B}]$, $s\in\{0,1\}$.

For the modified harvesting process, we similarly define the processes $\hat{x}_i$ and $\hat{s}_i$ given by~\eqref{eq:x_state_dynamics} and \eqref{eq:s_state_dynamics}, respectively, by replacing $E^{(i+1)}$ with $\hat{E}^{(i+1)}$.
The $N$-horizon total throughput obtained by Policy~\ref{policy} under the modified energy arrival process is given by
\[
\hat{J}_N(x,s)=\mathbb{E}\left[\left.\sum_{i=1}^{N}r(\hat{x}_i,\hat{s}_i)\right|\hat{x}_1=x,\hat{s}_1=s\right],
\]
and the long-term average throughput (for which we provided a lower bound in Proposition~\ref{prop:semi_bernoulli_lower_bound} in the previous section), is given by $\hat{\mathscr{T}}(\mathbf{g})=\liminf_{N\to\infty}\frac{1}{N}\hat{J}_N(x,s)$.

In the following proposition, we claim that the $N$-horizon expected throughput for the original distribution of block i.i.d.\ energy arrivals $E^{(i)}$ is greater than the throughput obtained for the modified distribution $\hat{E}^{(i)}$, for any $N$ and any initial state $(x,s)$.
\begin{proposition}\label{prop:semi_bernoulli_is_worst}
For any $x\in[0,\bar{B}]$, $s\in\{0,1\}$, and integer $N\geq1$:
\begin{equation}
J_N(x,s)\geq\hat{J}_N(x,s).
\end{equation}
\end{proposition}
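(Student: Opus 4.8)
The plan is to prove the inequality by backward induction on the horizon $N$, using the one-step Bellman recursion for the two value functions together with a concavity property of $J_N$ and the fact that the modification $E^{(i)}\mapsto\hat{E}^{(i)}$ is an extremal, mean-preserving spread that leaves the binary state $s$ untouched. First I would record that both value functions obey
\[
J_N(x,s)=r(x,s)+\mathbb{E}_{E}\!\left[J_{N-1}\big(x'(E;x,s),\,s'(E)\big)\right],
\]
and identically for $\hat{J}_N$ with $\hat{E}$ replacing $E$ and $\hat{J}_{N-1}$ replacing $J_{N-1}$, where $x'(\,\cdot\,;x,s)$ is the next-state map read off \eqref{eq:x_state_dynamics} and $s'(\,\cdot\,)=1\{\,\cdot\,>E_c\}$ from \eqref{eq:s_state_dynamics}. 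The crucial structural point is that the stage reward $r(x,s)$ in \eqref{eq:reward} and the transition maps are the \emph{same} functions in both systems; the recursions differ only in the arrival law used for the first transition and in the value-to-go.

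The first ingredient I would establish as a separate lemma, by its own induction on $N$, is that $J_N(x,s)$ is concave and non-decreasing in $x$ for each fixed $s\in\{0,1\}$. For the base case $J_1(x,s)=r(x,s)$: inspecting \eqref{eq:reward} and using that $C(\cdot)=\tfrac12\log(1+\cdot)$ and $\min(\,\cdot\,,E_c)$ are concave non-decreasing gives the claim for both values of $s$. For the inductive step it suffices to check that $x\mapsto x'(E;x,s)$ is concave non-decreasing for each fixed $E$: when $s=0$ the inner term $\min(x,E_c)$ is concave non-decreasing, scaling by $(1-q)\ge0$, adding $E$, and capping by the constant $\tfrac{\bar{B}+(T-1)E}{T}$ all preserve this; when $s=1$ the map is constant in $x$. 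Composing the concave non-decreasing $J_{N-1}(\,\cdot\,,s')$ with this map, taking expectation, and adding the concave non-decreasing $r(\,\cdot\,,s)$ then preserves concavity and monotonicity.

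With the lemma in hand the main induction closes as follows. The base case $N=1$ holds with equality since $J_1=\hat{J}_1=r$. For the inductive step the hypothesis $J_{N-1}\ge\hat{J}_{N-1}$ (pointwise) lets me bound $\hat{J}_{N-1}$ from above by $J_{N-1}$ inside the expectation defining $\hat{J}_N$, reducing the claim to
\[
\mathbb{E}_{E}\!\left[J_{N-1}\big(x'(E),s'(E)\big)\right]\ \ge\ \mathbb{E}_{\hat{E}}\!\left[J_{N-1}\big(x'(\hat{E}),s'(\hat{E})\big)\right],
\]
with the current state $(x,s)$ fixed (so that $x'(e)=\min\{c+e,\tfrac{\bar{B}+(T-1)e}{T}\}$ for a constant $c\ge0$ depending only on $(x,s)$). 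Since $\hat{E}=E$ on $\{E>E_c\}$, the two sides agree there; on $\{E\le E_c\}$ one has $s'(E)=s'(\hat{E})=0$, so the difference collapses to comparing $\mathbb{E}[\psi(E)\mid E\le E_c]$ with $\mathbb{E}[\psi(W)]$, where $\psi(e):=J_{N-1}(x'(e),0)$. This $\psi$ is concave on $[0,E_c]$, being the composition of the concave non-decreasing $J_{N-1}(\,\cdot\,,0)$ from the lemma with the concave non-decreasing $e\mapsto x'(e)$. Because $W\in\{0,E_c\}$ is the two-point law on the endpoints with $\mathbb{E}[W]=\mathbb{E}[E\mid E\le E_c]$—exactly the choice of $\Pr(W=E_c)$ in \eqref{eq:def_W}—the chord inequality for concave functions yields $\mathbb{E}[\psi(E)\mid E\le E_c]\ge\mathbb{E}[\psi(W)]$, completing the induction.

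The main obstacle is the concavity lemma: everything downstream hinges on $J_{N-1}(\,\cdot\,,s)$ being concave and non-decreasing in $x$, since only then does the extremal modification $W$—which maximally spreads the low-energy arrivals while fixing both their conditional mean and the indicator $s$—provably lower the expected value-to-go. The remaining points are bookkeeping that I would verify carefully: that the transition map stays concave non-decreasing under the $\min$-capping and in both regimes $s\in\{0,1\}$, and that the modification leaves $s'(\,\cdot\,)$ unchanged, so the comparison decouples cleanly on $\{E\le E_c\}$.
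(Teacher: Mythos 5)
Your proposal is correct and follows essentially the same route as the paper's proof: induction on $N$, a concavity/monotonicity lemma for the value function in $x$, and the two-point mean-preserving-spread inequality (the paper's Lemma~2) applied on the event $\{E^{(i)}\le E_c\}$, where the modification leaves the indicator $s'$ unchanged and coincides with the original arrival above $E_c$. The only cosmetic difference is that you track concavity of $J_{N-1}$ and apply the induction hypothesis inside $\hat{J}_N$, whereas the paper tracks concavity of $\hat{J}_{N-1}$ and applies the hypothesis inside $J_N$; both orderings close the induction identically.
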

This is proved by induction, making use of the concavity and monotonicity of the reward function $r(x,s)$. We defer the proof to Appendix~\ref{sec:semi_bernoulli_is_worst}.

By taking $N\to\infty$, an immediate corollary of Proposition~\ref{prop:semi_bernoulli_is_worst} is 
\[
\mathscr{T}(\mathbf{g}) \geq \hat{\mathscr{T}}(\mathbf{g}).
\]
Since $\hat{E}^{(i)}$ is a semi-Bernoulli process, i.e. it satisfies $\Pr(0<\hat{E}^{(i)}<E_c)=0$, we can readily obtain a lower bound on $\hat{\mathscr{T}}(\mathbf{g})$ by applying Proposition~\ref{prop:semi_bernoulli_lower_bound} from the previous section:
\begin{align*}
\hat{\mathscr{T}}(\mathbf{g})
&\geq \tfrac{p(T-1)}{T}\mathbb{E}\Big[C\big(\min\big\{\hat{E}^{(i)}-\tfrac{\bar{B}-\hat{E}^{(i)}}{T-1},\bar{B}\big\}\big)\Big|\hat{E}^{(i)}>E_c\Big]\\*
&\qquad + \tfrac{p+T(1-p)}{T}C(E_c\cdot\Pr(\hat{E}^{(i)}\geq E_c)) - \tfrac{1}{2}\log e,
\end{align*}
where the above expression for the approximate throughput $\bar{\Theta}$ of the modified process $\hat{E}^{(i)}$ is given by \eqref{eq:theta_bar_semi_bernoulli_alt}.
Note that $p = \Pr(\hat{E}^{(i)} > E_c) = \Pr(E^{(i)} > E_c)$.

By construction \eqref{eq:def_Ehat}, the first term is equivalent to
\begin{align*}
\hspace{20pt}
\lefteqn{\hspace{-20pt}\mathbb{E}\left[\left.C\left(\min\left\{\hat{E}^{(i)}-\tfrac{\bar{B}-\hat{E}^{(i)}}{T-1},\bar{B}\right\}\right)\right|\hat{E}^{(i)}>E_c\right]}\\*
&=\mathbb{E}\left[\left.C\left(\min\left\{E^{(i)}-\tfrac{\bar{B}-{E}^{(i)}}{T-1},\bar{B}\right\}\right)\right|{E}^{(i)}>E_c\right].
\end{align*}
For the second term, we use \eqref{eq:prob_Ehat_positive} followed by \eqref{eq:def_q} to obtain
\begin{align*}
C(E_c\cdot\Pr(\hat{E}^{(i)}\geq E_c))
&= C(E_c\cdot q)\\
&= C(\mathbb{E}[\min(E^{(i)},E_c)]).
\end{align*}
It follows that
\begin{align}
\mathscr{T}(\mathbf{g}) &\geq \tfrac{p(T-1)}{T}\mathbb{E}\Big[C\big(\min\big\{{E}^{(i)}-\tfrac{\bar{B}-{E}^{(i)}}{T-1},\bar{B}\big\}\big)\Big|{E}^{(i)}>E_c\Big]\nonumber\\*
 &\qquad + \tfrac{p+T(1-p)}{T}C(\mathbb{E}[\min(E^{(i)},E_c)]) - \tfrac{1}{2}\log e,
\end{align}
which concludes the proof of the lower bound in Theorem~\ref{thm:throughput_bounds}.

%-------------------%
\section{Upper Bound}
%-------------------%
\label{sec:upper_bound}

Fix an arbitrary policy $\mathbf{g}$, and consider the expected total throughput for $N$ blocks, or equivalently $NT$ time slots:
\begin{align}
\mathscr{T}_{NT}(\mathbf{g})
&= \sum_{t=1}^{NT}\mathbb{E}[C(g_t)]\nonumber\\
&= \sum_{i=1}^{N}\sum_{j=1}^{T}\mathbb{E}[C(g_j^{(i)})]\nonumber\\
&= \sum_{j=1}^{T-1}\mathbb{E}[C(g_j^{(1)})]
	+\mathbb{E}[C(g_T^{(N)})]\nonumber\\*
	&\qquad
	+\sum_{i=2}^{N}\bigg(\sum_{j=1}^{T-1}\mathbb{E}[C(g_j^{(i)})]
	+\mathbb{E}[C(g_T^{(i-1)})]\bigg)\nonumber\\
&\leq \sum_{i=2}^{N}\bigg(\sum_{j=1}^{T-1}\mathbb{E}[C(g_j^{(i)})]
	+\mathbb{E}[C(g_T^{(i-1)})]\bigg)
	+ TC(\bar{B}),
\end{align}
where the inequality is because $g_j^{(i)} \leq b_j^{(i)} \leq \bar{B}$ for any $i,j$.
According to Lemma~\ref{lemma:MDP_reduction} (specifically \eqref{eq:reward_upper_bound} and \eqref{eq:first_T-1_powers_upper_bound} in Appendix~\ref{sec:MDP_reduction}), we can upper bound the total rate of the first $T-1$ time slots in each block as follows:
\begin{align*}
\mathscr{T}_{NT}(\mathbf{g})
&\leq \sum_{i=2}^{N}\mathbb{E}\Big[(T-1)C\big(\min\big\{E^{(i)}-\tfrac{b_T^{(i)}-b_1^{(i)}}{T-1}, \bar{B}\big\}\big)\Big]\\*
&\qquad+\sum_{i=2}^{N}\mathbb{E}\big[C(g_T^{(i-1)})\big] + TC(\bar{B}).
\end{align*}
Since $b_1^{(i)} = \min\{b_T^{(i-1)} - g_T^{(i-1)} + E^{(i)}, \bar{B}\}$ and $C(\,\cdot\,)$ is nondecreasing:
\begin{align}
\mathscr{T}_{NT}(\mathbf{g})
% &\leq \sum_{i=2}^{N}\mathbb{E}\Big[(T-1)C\big(\min\big\{E^{(i)} - \tfrac{b_T^{(i)}-b_T^{(i-1)}+g_T^{(i-1)}-E^{(i)}}{T-1}, \bar{B}\big\}\big)\nonumber\\*
% &\hspace{50pt}+C(g_T^{(i-1)})\Big] + TC(\bar{B})\nonumber\\*
&\leq \sum_{i=2}^{N}\theta_i + TC(\bar{B}),\label{eq:total_throughput_upper_bound}
\end{align}
where we have denoted for $i=2,\ldots,N$:
\begin{align*}
\theta_i &= \mathbb{E}\Big[(T-1)C\big(\min\big\{E^{(i)} - \tfrac{b_T^{(i)}-b_T^{(i-1)}+g_T^{(i-1)}-E^{(i)}}{T-1}, \bar{B}\big\}\big)\Big]\\
&\qquad+\mathbb{E}\big[C(g_T^{(i-1)})\big].
\end{align*}
We break the expectation over $E^{(i)}$ as shown in equations \eqref{eq:theta_i_ub_i}--\eqref{eq:theta_i_upper_bound} at the top of the next page,
where \eqref{eq:theta_i_ub_i} is because $\Pr(E^{(i)}\leq E_c) = 1-p$;
\eqref{eq:theta_i_ub_ii} is due to the monotonicity of $C(\,\cdot\,)$;
\eqref{eq:theta_i_ub_iii} is because $g_T^{(i-1)}$ is independent of $E^{(i)}$ (by definition of an online policy and because $E^{(i)}$ are i.i.d.), and by denoting 
\begin{equation*}
p' = \Pr(E^{(i)} > \bar{B}),
\end{equation*}
and \eqref{eq:theta_i_upper_bound} is by applying Jensen's inequality to the two terms in the first expectation.
\addtocounter{equation}{4}

Denote the following expected values, for $i=1,\ldots,N$:
\begin{align*}
\gamma^{(i)} &= \mathbb{E}[g_T^{(i)}],\\
\tilde{\mu} &= \mathbb{E}[E^{(i)} | E^{(i)}\leq E_c],\\
\beta^{(i)} &= \mathbb{E}[b_T^{(i)}],\\
\beta_0^{(i)} &= \mathbb{E}[b_T^{(i)} | E^{(i)} \leq E_c],\\
\beta_1^{(i)}(x) &= \mathbb{E}[b_T^{(i)} | E^{(i)} = x],\quad E_c < x \leq \bar{B}.
\end{align*}

%------------------------------%
%--- Double Column Equation ---%
%------------------------------%
\begin{figure*}[!t]
\normalsize
\setcounter{mytempeqncnt}{\value{equation}}
\setcounter{equation}{53}
\begin{align}
\theta_i &\overset{\text{(i)}}{=} \mathbb{E}[C(g_T^{(i-1)})]
	+ (1-p)(T-1)\mathbb{E}\Big[C\big(\min\big\{E^{(i)}-\tfrac{b_T^{(i)}-b_T^{(i-1)}+g_T^{(i-1)}-E^{(i)}}{T-1},\bar{B}\big\}\big)\Big|E^{(i)}\leq E_c\Big]\nonumber\\*
	&\qquad+\sum_{x>E_c}P_E(x)(T-1)\mathbb{E}\Big[C\big(\min\big\{x-\tfrac{b_T^{(i)}-b_T^{(i-1)}+g_T^{(i-1)}-x}{T-1},\bar{B}\big\}\big)\Big|E^{(i)}=x\Big]
	\label{eq:theta_i_ub_i}\\
&\overset{\text{(ii)}}{\leq} \mathbb{E}[C(g_T^{(i-1)})]
	+ (1-p)(T-1)\mathbb{E}\Big[C\big(E^{(i)}-\tfrac{b_T^{(i)}-b_T^{(i-1)}+g_T^{(i-1)}-E^{(i)}}{T-1}\big)\Big|E^{(i)}\leq E_c\Big]\nonumber\\*
	&\qquad +\sum_{E_c<x\leq\bar{B}}P_E(x)(T-1)\mathbb{E}\Big[C\big(x-\tfrac{b_T^{(i)}-b_T^{(i-1)}+g_T^{(i-1)}-x}{T-1}\big)\Big|E^{(i)}=x\Big]+\Pr(E^{(i)}>\bar{B})\cdot(T-1)C(\bar{B})
	\label{eq:theta_i_ub_ii}\\
&\overset{\text{(iii)}}{=} \mathbb{E}\Big[C(g_T^{(i-1)}) + (1-p)(T-1)
	C\big(E^{(i)}-\tfrac{b_T^{(i)}-b_T^{(i-1)}+g_T^{(i-1)}-E^{(i)}}{T-1}\big)
	\Big|E^{(i)}\leq E_c\Big]\nonumber\\*
	&\qquad +\sum_{E_c<x\leq\bar{B}}P_E(x)(T-1)\mathbb{E}\Big[C\big(x-\tfrac{b_T^{(i)}-b_T^{(i-1)}+g_T^{(i-1)}-x}{T-1}\big)\Big|E^{(i)}=x\Big]+p'(T-1)C(\bar{B})
	\label{eq:theta_i_ub_iii}\\
&\overset{\text{(iv)}}{\leq} (p+(1-p)T)\mathbb{E}\Big[C\big(\tfrac{pg_T^{(i-1)}+(1-p)(TE^{(i)}-b_T^{(i)}+b_T^{(i-1)}}{p+(1-p)T}\big)\Big|E^{(i)}\leq E_c\Big]\nonumber\\*
	&\qquad +\sum_{E_c<x\leq\bar{B}}P_E(x)(T-1)\mathbb{E}\Big[C\big(\tfrac{Tx-b_T^{(i)}+b_T^{(i-1)}-g_T^{(i-1)}}{T-1}\big)\Big|E^{(i)}=x\Big] + p'(T-1)C(\bar{B}),
	\label{eq:theta_i_upper_bound}
\end{align}
\setcounter{equation}{\value{mytempeqncnt}}
\hrulefill
\vspace*{4pt}
\end{figure*}

%------------------------------%
%--- Double Column Equation ---%
%------------------------------%
\begin{figure*}[!b]
\vspace*{4pt}
\hrulefill
\normalsize
\setcounter{mytempeqncnt}{\value{equation}}
\setcounter{equation}{59}
\begin{align}
\tfrac{1}{NT}\mathscr{T}_{NT}(\mathbf{g})
&\leq \tfrac{1}{N}C(\bar{B})
+ \tfrac{1}{N}\sum_{i=2}^{N}\tfrac{p+(1-p)T}{T}C\big(\tfrac{p\gamma^{(i-1)} + (1-p)(T\tilde{\mu}-\beta_0^{(i)}+\beta^{(i-1)})}{p+(1-p)T}\big)\nonumber\\*
&\qquad +\tfrac{1}{N}\sum_{i=2}^{N}\tfrac{T-1}{T}\sum_{E_c<x\leq\bar{B}}P_E(x)C\big(\tfrac{Tx-\beta_1^{(i)}(x)+\beta^{(i-1)}-\gamma^{(i-1)}}{T-1}\big)
+\tfrac{N-1}{N}p'\tfrac{T-1}{T}C(\bar{B}).
\label{eq:average_throughput_upper_bound}
\end{align}
\setcounter{equation}{\value{mytempeqncnt}}
\end{figure*}

Note that since $E^{(i)}>\bar{B}$ implies $b_T^{(i)}=\bar{B}$, the following relation holds:
\begin{equation}
\beta^{(i)} = (1-p) \beta_0^{(i)} + \sum_{E_c<x\leq\bar{B}}P_E(x)\beta_1^{(i)}(x) + p'\bar{B}.
\label{eq:battery_steady_state}
\end{equation}
Now, applying Jensen's inequality to \eqref{eq:theta_i_upper_bound}, and again observing that $g_T^{(i-1)}$, as well as $b_T^{(i-1)}$, are independent of $E^{(i)}$:
\begin{align}
\theta_i &\leq (p+(1-p)T) C\big(\tfrac{p\gamma^{(i-1)} + (1-p)(T\tilde{\mu}-\beta_0^{(i)}+\beta^{(i-1)})}{p+(1-p)T}\big)\nonumber\\*
&\qquad +(T-1)\sum_{E_c<x\leq\bar{B}}P_E(x)C\big(\tfrac{Tx-\beta_1^{(i)}(x)+\beta^{(i-1)}-\gamma^{(i-1)}}{T-1}\big)\nonumber\\*
&\qquad+p'(T-1)C(\bar{B}).
\end{align}
Substituting into \eqref{eq:total_throughput_upper_bound} and dividing by $NT$ yields \eqref{eq:average_throughput_upper_bound} at the bottom of the page.
\addtocounter{equation}{1}

Denote the following time-averages of the expected values above:
\begin{align*}
\bar{\gamma} &= \tfrac{1}{N-1}\sum_{i=1}^{N-1}\gamma^{(i)},\\
\bar{\beta} &= \tfrac{1}{N-1}\sum_{i=1}^{N-1}\beta^{(i)},\\
\bar{\beta}_0 &= \tfrac{1}{N-1}\sum_{i=1}^{N-1}\beta_0^{(i)},\\
\bar{\beta}_1(x) &= \tfrac{1}{N-1}\sum_{i=1}^{N-1}\beta_1^{(i)}(x),\quad E_c<x\leq\bar{B}.
\end{align*}
We again apply Jensen's inequality to \eqref{eq:average_throughput_upper_bound}, giving \eqref{eq:average_throughput_edge_effects1}--\eqref{eq:average_throughput_upper_bound2} at the top of the next page.
Eq. \eqref{eq:average_throughput_edge_effects2} follows because $0\leq b_T^{(i)}\leq \bar{B}$ for any $i$;
and \eqref{eq:average_throughput_upper_bound2}
is due to the fact that $\ln(1+x+y) \leq \ln(1+x) + y$ for $x,y\geq0$,
or equivalently $C(x+y) \leq C(x) + \tfrac{1}{2\ln 2}y$.
\addtocounter{equation}{3}

Since $g_T^{(i)}\leq b_T^{(i)}$, we can take expectation to obtain $\gamma^{(i)}\leq\beta^{(i)}$ for every $i$, which implies $\bar{\gamma}\leq\bar{\beta}$.
Similarly, $b_T^{(i)}\leq\bar{B}$, which implies $\beta_1^{(i)}(x)\leq\bar{B}$, and consequently $\bar{\beta}_1(x) \leq \bar{B}$ for all $E_c<x\leq\bar{B}$.
Finally, the relation \eqref{eq:battery_steady_state} implies 
\[
\bar{\beta} = (1-p)\bar{\beta}_0 + \sum_{E_c<x\leq\bar{B}}P_E(x)\bar{\beta}_1(x) + p'\bar{B}.
\]
With these constraints, we can further upper bound \eqref{eq:average_throughput_upper_bound2} as follows:
\begin{equation}
\tfrac{1}{NT}\mathscr{T}_{NT}(\mathbf{g})
\leq \tfrac{N-1}{N}\Theta^\star + \tfrac{1}{N}C(\bar{B}) + \tfrac{1}{2\ln2}\tfrac{1}{NT}\bar{B},
\label{eq:average_throughput_upper_bound3}
\end{equation}
where $\Theta^\star$ is the optimal solution to the following convex optimization problem:\footnote{These constraints are only necessary but are not sufficient to describe the optimization variables; nevertheless, this is still a valid upper bound on \eqref{eq:average_throughput_upper_bound2}.}
\begin{equation}
\begin{aligned}
&\hspace{-6pt}\underset{\substack{\gamma,\,\beta,\,\beta_0,\\ \{\beta_1(x)\}_{E_c<x\leq\bar{B}}}}{\text{maximize}}
&&\tfrac{p+(1-p)T}{T}C\big(\tfrac{p\gamma+(1-p)(T\tilde{\mu}-\beta_0+\beta)}{p+(1-p)T}\big)+p'\tfrac{T-1}{T}C(\bar{B})\\[-10pt]
&&&\quad
+\tfrac{T-1}{T}\sum_{E_c<x\leq\bar{B}}P_E(x)C\big(\tfrac{Tx-\beta_1(x)+\beta-\gamma}{T-1}\big)\\
&\text{subject to}
&&\gamma\leq\beta,\\
&&&\beta_1(x)\leq\bar{B},\qquad E_c<x\leq\bar{B},\\
&&&\beta=(1-p)\beta_0+\sum_{E_c<x\leq\bar{B}}P_E(x)\beta_1(x)+p'\bar{B}.
\end{aligned}
\label{eq:opt_problem}
\end{equation}
%------------------------------%
%--- Double Column Equation ---%
%------------------------------%
\begin{figure*}[!t]
\normalsize
\setcounter{mytempeqncnt}{\value{equation}}
\setcounter{equation}{60}
\begin{align}
\tfrac{1}{NT}\mathscr{T}_{NT}(\mathbf{g})
&\leq \tfrac{1}{N}C(\bar{B}) 
+\tfrac{N-1}{N}\tfrac{p+(1-p)T}{T}C\bigg(\frac{p\bar{\gamma}+(1-p)(T\tilde{\mu}-\bar{\beta}_0-\tfrac{1}{N-1}(\beta_0^{(N)}-\beta_0^{(1)})+\bar{\beta})}{p+(1-p)T}\bigg)\nonumber\\*
&\quad + \tfrac{N-1}{N}\tfrac{T-1}{T}\sum_{E_c<x\leq\bar{B}}P_E(x)C\bigg(\frac{Tx-\bar{\beta}_1(x)-\tfrac{1}{N-1}(\beta_1^{(N)}(x)-\beta_1^{(1)}(x))+\bar{\beta}-\bar{\gamma}}{T-1}\bigg)
+\tfrac{N-1}{N}p'\tfrac{T-1}{T}C(\bar{B})
\label{eq:average_throughput_edge_effects1}\\
&\leq \tfrac{1}{N}C(\bar{B}) 
+\tfrac{N-1}{N}\tfrac{p+(1-p)T}{T}C\bigg(\frac{p\bar{\gamma}+(1-p)(T\tilde{\mu}-\bar{\beta}_0+\bar{\beta}+\tfrac{1}{N-1}\bar{B})}{p+(1-p)T}\bigg)\nonumber\\*
&\quad + \tfrac{N-1}{N}\tfrac{T-1}{T}\sum_{E_c<x\leq\bar{B}}P_E(x)C\bigg(\frac{Tx-\bar{\beta}_1(x)+\bar{\beta}-\bar{\gamma}+\tfrac{1}{N-1}\bar{B}}{T-1}\bigg)
+\tfrac{N-1}{N}p'\tfrac{T-1}{T}C(\bar{B})
\label{eq:average_throughput_edge_effects2}\\
&\leq \tfrac{1}{N}C(\bar{B})
+\tfrac{N-1}{N}\tfrac{p+(1-p)T}{T}C\big(\tfrac{p\bar{\gamma}+(1-p)(T\tilde{\mu}-\bar{\beta}_0+\bar{\beta})}{p+(1-p)T}\big)
+\tfrac{1}{2\ln 2}\tfrac{1-p}{NT}\bar{B}\nonumber\\*
&\quad +\tfrac{N-1}{N}\tfrac{T-1}{T}\sum_{E_c<x\leq\bar{B}}P_E(x)C\big(\tfrac{Tx-\bar{\beta}_1(x)+\bar{\beta}-\bar{\gamma}}{T-1}\big)
+\tfrac{1}{2\ln 2}\tfrac{p}{NT}\bar{B}
+\tfrac{N-1}{N}p'\tfrac{T-1}{T}C(\bar{B}).
\label{eq:average_throughput_upper_bound2}
\end{align}
\setcounter{equation}{\value{mytempeqncnt}}
\hrulefill
\vspace*{4pt}
\end{figure*}
It is shown in Appendix~\ref{sec:kkt_verification}, by verifying that KKT conditions hold, that the following solution is optimal:
\begin{equation}
\gamma^\star=\beta^\star=p\bar{B},\qquad
\beta_0^\star=0,\qquad
\beta_1^\star(x)=\bar{B},\ E_c<x\leq\bar{B}.
\label{eq:opt_solution}
\end{equation}
The optimal objective is given by
\begin{align}
\Theta^\star
&=\tfrac{p+(1-p)T}{T}C\big(\tfrac{p\bar{B}+(1-p)T\tilde{\mu}}{p+(1-p)T}\big)\nonumber\\*
&\quad\ 
+\tfrac{T-1}{T}\sum_{E_c<x\leq\bar{B}}P_E(x)C\big(\tfrac{Tx-\bar{B}}{T-1}\big)
+p'\tfrac{T-1}{T}C(\bar{B})\nonumber\\
&\overset{\text{(i)}}{=}\tfrac{p+(1-p)T}{T}C(\mathbb{E}[\min(E^{(i)},\,E_c)])\nonumber\\*
&\quad\
+\tfrac{T-1}{T}p\mathbb{E}\big[C\big(\min\big\{E^{(i)}-\tfrac{\bar{B}-E^{(i)}}{T-1},\bar{B}\big\}\big)\big|E^{(i)}>E_c\big]\nonumber\\
&\overset{\text{(ii)}}{=}\bar{\Theta},
\end{align}
where (i) is by \eqref{eq:def_q_alt} and \eqref{eq:def_q},
and (ii) is by definition of the approximate throughput \eqref{eq:theta_bar_general}.
Substituting in \eqref{eq:average_throughput_upper_bound3} and taking the limit as $N\to\infty$:
\begin{align*}
\liminf_{N\to\infty}\tfrac{1}{NT}\mathscr{T}_{NT}(\mathbf{g})
&\leq \bar{\Theta}.
\end{align*}
Since the policy $\mathbf{g}$ was arbitrary, this implies $\Theta \leq \bar{\Theta}$.
This concludes the proof of the upper bound in Theorem~\ref{thm:throughput_bounds}.

%------------------%
\section{Conclusion}
%------------------%
\label{sec:conclusion}

We proposed a simple online power control policy for the energy harvesting channel with block i.i.d.\ energy arrivals, and showed that it is within a constant gap from the optimum.
This resulted in a simple and insightful formula that approximates the optimal throughput.
Previously, optimal power control has been characterized for the offline case and for the online case with i.i.d.\ energy arrivals. Our results reveal how correlation in the energy harvesting process impacts online power control and the corresponding optimal throughput. While in this paper we consider block i.i.d.\ energy arrivals with arbitrary distribution, the development of these results for an important special case, namely block i.i.d. Bernoulli arrivals, can be found in our preliminary work \cite{BlockiidBernoulli}, which provided the insights for the current paper. To the best of our knowledge, this is the first paper to develop online power control policies with explicit guarantees on optimality for an energy arrival process with memory and a finite battery.

\appendices

%------------------------------------------------%
\section{Proof of Lemma~\ref{lemma:MDP_reduction}}
%------------------------------------------------%
\label{sec:MDP_reduction}

Consider an arbitrary set of power allocations $g_1^{(i)},\ldots,g_{T}^{(i)}$ that satisfy the energy constraints \eqref{eq:power} and \eqref{eq:battery}. 
We will show that we can replace the first $T-1$ elements with an appropriate $\tilde{g}_1^{(i)},\ldots,\tilde{g}_{T-1}^{(i)}$ such that $\tilde{g}_1^{(i)}=\ldots=\tilde{g}_{T-1}^{(i)}$, while still preserving the energy constraints and without decreasing the reward.

Specifically, let $b_T^{(i)}$ be the battery state at the last time slot of block $i$ given $g_1^{(i)},\ldots,g_{T-1}^{(i)}$. Set
\[
\tilde{g}_1^{(i)}=\ldots=\tilde{g}_{T-1}^{(i)}
=\min\big(E^{(i)}-\tfrac{b_T^{(i)}-b_1^{(i)}}{T-1},\bar{B}\big).
\]
We will show that this is an admissible policy, which produces a higher reward than the original policy, and results in the same battery state $b_T^{(i)}$ at the end of the block.

First, the reward obtained by the original policy can be upper bounded using concavity of the function $C(\,\cdot\,)$:
\begin{align}
r_i&=\frac{1}{T}\sum_{j=1}^{T}C(g_j^{(i)})\nonumber\\
&\leq \frac{T-1}{T}C\Big(\frac{1}{T-1}\sum_{j=1}^{T-1}g_j^{(i)}\Big)+\frac{1}{T}C(g_T^{(i)}).\label{eq:reward_upper_bound}
\end{align}
It follows from \eqref{eq:battery} that $b_j^{(i)}\leq b_{j-1}^{(i)}-g_{j-1}^{(i)}+E^{(i)}$ for $j=2,\ldots,T-1$, which implies 
\begin{equation*}
b_T^{(i)}\leq b_1^{(i)}+(T-1)E^{(i)}-\sum_{j=1}^{T-1}g_j^{(i)}.
\end{equation*}
Additionally, we clearly have $g_j^{(i)}\leq b_j^{(i)}\leq\bar{B}$ for every $j=1,\ldots,T-1$, thus
\begin{equation}
\frac{1}{T-1}\sum_{j=1}^{T-1}g_j^{(i)}\leq \min\big(E^{(i)}-\tfrac{b_T^{(i)}-b_1^{(i)}}{T-1}, \bar{B}\big)=\tilde{g}_1^{(i)}.
\label{eq:first_T-1_powers_upper_bound}
\end{equation}
Substituting in \eqref{eq:reward_upper_bound} and using the fact that $C(\,\cdot\,)$ is non-decreasing yields
\begin{align*}
r_i&\leq\frac{T-1}{T}C(\tilde{g}_1^{(i)})+\frac{1}{T}C(g_T^{(i)}).
\end{align*}

It remains to show that the policy $\tilde{g}_j^{(i)}$ is admissible, and that it results in the same final battery state $b_T^{(i)}$.
To this end, denote the battery state resulting from the new policy by $\tilde{b}_j^{(i)}$, $j=1,\ldots,T-1$. That is:
\begin{align*}
\tilde{b}_1^{(i)}&=b_1^{(i)},\\
\tilde{b}_j^{(i)}&=\min\{b_{j-1}^{(i)}-\tilde{g}_1^{(i)}+E^{(i)},\bar{B}\},
\qquad j=2,\ldots,T-1.
\end{align*}

If $E^{(i)}\geq\bar{B}$, then $b_1^{(i)}=b_T^{(i)}=\bar{B}$ regardless of the policy $g_1^{(i)},\ldots,g_{T-1}^{(i)}$, which implies $\tilde{g}_1^{(i)}=\bar{B}$ and $\tilde{b}_j^{(i)}=\bar{B}$ for all $j=1,\ldots,T-1$.
It trivially follows that the policy is admissible and $\tilde{b}_T^{(i)}=b_T^{(i)}$.

Assume therefore $E^{(i)}<\bar{B}$.
Under this assumption, it is easy to see that $E^{(i)}\leq b_j^{(i)}\leq\bar{B}$ for all $j$, hence
\[
E^{(i)}-\tfrac{b_T^{(i)}-b_1^{(i)}}{T-1}
\leq \tfrac{(T-2)E^{(i)}+b_1^{(i)}}{T-1}
< \bar{B},
\]
which implies $\tilde{g}_1^{(i)}=E^{(i)}-\tfrac{b_T^{(i)}-b_1^{(i)}}{T-1}$.
We will show by induction that
\begin{equation}
\tilde{b}_j^{(i)}=\frac{(T-j)b_1^{(i)}+(j-1)b_T^{(i)}}{T-1},
\quad j=1,\ldots,T-1.
\label{eq:new_battery_state}
\end{equation}
This is clearly true for $j=1$. Assuming it holds for $j$, we have for $j+1$:
\begin{align*}
\tilde{b}_{j+1}^{(i)}
&=\min\big\{\tilde{b}_j^{(i)}-\tilde{g}_1^{(i)}+E^{(i)}, \bar{B}\big\}\\
&=\min\Big\{\tfrac{(T-j)b_1^{(i)}+(j-1)b_T^{(i)}}{T-1}
	-E^{(i)}+\tfrac{b_T^{(i)}-b_1^{(i)}}{T-1}+E^{(i)}, \bar{B}\Big\}\\
&=\tfrac{(T-j-1)b_1^{(i)}+j b_T^{(i)}}{T-1},
\end{align*}
where in the last step we used the fact that $b_1^{(i)},b_T^{(i)}\leq\bar{B}$.

It is now clear that $\tilde{b}_T^{(i)}=b_T^{(i)}$.
To see that the policy $\tilde{g}_j^{(i)}$ is admissible, recall that
$\tilde{g}_1^{(i)}\leq \tfrac{(T-2)E^{(i)}+b_1^{(i)}}{T-1}$,
which implies $\tilde{g}_1^{(i)}\leq b_1^{(i)}=\tilde{b}_1^{(i)}$ and $\tilde{g}_1^{(i)}\leq\tfrac{(T-2)b_T^{(i)}+b_1^{(i)}}{T-1}=\tilde{b}_{T-1}^{(i)}$, which in turn implies $\tilde{g}_1^{(i)}\leq\tilde{b}_j^{(i)}$ for all $j=1,\ldots,T-1$.

Note that since $b_T^{(i)}\geq E^{(i)}$, we have $\tilde{g}_1^{(i)}\leq E^{(i)}-\tfrac{E^{(i)}-b_1^{(i)}}{T-1}$.
We conclude that we can reduce the action space to the set of all policies for which $g_1^{(i)}=\ldots=g_{T-1}^{(i)}$ and
\[
0\leq g_1^{(i)}\leq \min\big(E^{(i)}-\tfrac{E^{(i)}-b_1^{(i)}}{T-1},\bar{B}\big).
\]
This concludes the proof of the lemma.\qed

%--------------------------------------------------------%
\section{Equation \eqref{eq:def_Ec} has a Unique Solution}
%--------------------------------------------------------%
\label{sec:unique_Ec}

Consider the function
\[
f(x) = \bar{B} - Tx + (T-1)\mathbb{E}[\min(E^{(i)}, x)].
\]
Observe that $f(0)=\bar{B}>0$ and
$f(\bar{B}) = (T-1)(\mathbb{E}[\min(E^{(i)},\bar{B})-\bar{B}) \leq 0$.
Moreover, it follows from the dominated convergence theorem \cite[Thm. 1.5.6]{durrett2010probability} that $\mathbb{E}[\min(E^{(i)},x)]$ is a continuous function of $x$, and therefore $f(x)$ is continuous as well.
It then follows that $f(x)$ must have a zero in the interval $[0,\bar{B}]$.

To show this zero is unique, we will show that $f(x)$ is monotonically decreasing. Since
$f(x)=\bar{B} - x - (T-1)(x - \mathbb{E}[\min(E^{(i)},x)])$,
it is enough to show that the function $h(x)=x - \mathbb{E}[\min(E^{(i)},x)]$ is monotonically increasing. Indeed, for any $0\leq x< y$:
\begin{align*}
h(x)&=x-\mathbb{E}[E^{(i)}\cdot 1\{E^{(i)}\leq x\}]-x \Pr(E^{(i)}>x)\\
&=x\Pr(E^{(i)}\leq x)-\mathbb{E}[E^{(i)}\cdot 1\{E^{(i)}\leq x\}]\\
&=\mathbb{E}[(x-E^{(i)})\cdot 1\{E^{(i)}\leq x\}]\\
% &<\mathbb{E}[(y-E^{(i)})\cdot 1\{E^{(i)}\leq x\}]\\
% &\leq \mathbb{E}[(y-E^{(i)})\cdot 1\{E^{(i)}\leq y\}]\\
&<\mathbb{E}[(y-E^{(i)})\cdot 1\{E^{(i)}\leq y\}]\\
&=h(y).
\end{align*}
Hence $f(x)$ is strictly decreasing, and it must have a unique zero in the interval $[0,\bar{B}]$, which is the unique solution to~\eqref{eq:def_Ec}.\qed

%----------------------------------------------------------------------%
\section{Proof of Proposition~\ref{prop:capacity_throughput_connection}}
%----------------------------------------------------------------------%
\label{sec:inf_theory_gap}

The proof follows closely the proof of Theorem 4 in~\cite{ShavivNguyenOzgur2016}. We provide a sketch of the proof, elaborating only the necessary modifications for block i.i.d.\ energy arrivals.
%\begin{proof}[Proof of Proposition~\ref{prop:capacity_throughput_connection}]

In addition to the capacity $C$, we will also be interested in the capacity when energy arrival observations are available at the receiver, i.e. the decoding function is of the form $f^{\mathrm{dec}}:\mathcal{Y}^N\times\mathcal{E}^N\to\mathcal{M}$. Denote this capacity by $C_{\mathrm{Rx}}$. 

Clearly $C \leq C_{\mathrm{Rx}}$.
On the other hand, by \cite[Thm.~1]{Jafar2006}, the maximum capacity improvement due to the availability of side information is limited by the amount of side information itself, that is $C_{\mathrm{Rx}} - C \leq \lim_{N\to\infty}\tfrac{1}{N}H(E^N)$ (see also \cite[Sec.~VI-B2]{ShavivNguyenOzgur2016}).
Due the block i.i.d.\ structure of the energy arrivals process, we get
\begin{equation}
C_{\mathrm{Rx}} - \tfrac{1}{T}H(E^{(i)}) \leq C \leq C_{\mathrm{Rx}}.
\label{eq:capacity_without_rx_side_information}
\end{equation}

Following the proof of \cite[Thm.~2]{ShavivNguyenOzgur2016}, it can be seen that $C_{\mathrm{Rx}}$ is given by the following $N$-letter expression:
\begin{align}
C_{\mathrm{Rx}} &= \lim_{N\to\infty}\frac{1}{NT}\sup I(X^{NT}; Y^{NT}| E^{(1)},\ldots,E^{(N)}).
\label{eq:capacity_n_letter}
\end{align}
where the supremum is over all input probability distributions of the form
\begin{align*}
\hspace{30pt}\lefteqn{\hspace{-30pt}p(x^{NT}|e^{(1)},\ldots,e^{(N)})}\nonumber\\*
&=\prod_{i=1}^{N}p(x_{(i-1)T+1}^{iT}|x^{(i-1)T},e^{(1)},\ldots,e^{(i)}),
\end{align*}
which satisfy the energy constraints \eqref{eq:inf_theory_amplitude} and \eqref{eq:inf_theory_battery} with probability 1 for any sequence of energy arrivals blocks $e^{(1)},\ldots,e^{(N)}$.
More precisely, for any $(e^{(1)},\ldots,e^{(N)})\in\mathcal{E}^N$, the input probability distribution defines an RV $X^{NT}$. This RV must satisfy 
\begin{align*}
X_t^2 &\leq B_t,\\
B_t &= \min\{B_{t-1} - X_{t-1}^2 + e^{(i)}, \bar{B}\}
\end{align*}
almost surely for all $t=(i-1)T+1,\ldots,iT$ and $i=1,\ldots,N$.

Next, we can derive an upper and lower bound on $C_{\mathrm{Rx}}$ following the same lines as the proof of \cite[Thm.~4]{ShavivNguyenOzgur2016}:
\begin{equation}
\Theta - \tfrac{1}{2}\log\big(\tfrac{\pi e}{2}\big) \leq C_{\mathrm{Rx}} \leq \Theta.
\label{eq:capacity_rx_throughput}
\end{equation}
It should be noted that this proof does not depend on the statistics of the energy arrivals process, and could be applied to any energy arrivals process as long as the underlying physical channel is memoryless and capacity is given by an $N$-letter mutual information expression such as \eqref{eq:capacity_n_letter}.
Combining \eqref{eq:capacity_without_rx_side_information} with \eqref{eq:capacity_rx_throughput} yields the desired result.\qed
% \end{proof}

%---------------------------------------------------------------%
\section{Proof of Proposition~\ref{prop:semi_bernoulli_is_worst}}
%---------------------------------------------------------------%
\label{sec:semi_bernoulli_is_worst}

We prove Proposition~\ref{prop:semi_bernoulli_is_worst} by induction.
Clearly for $N=1$ we have
\[
J_1(x,s)=\hat{J}_1(x,s)=r(x,s).
\]
Additionally, it follows from~\eqref{eq:reward} that $\hat{J}_1(x,s)$ is a concave and non-decreasing function of $x$ for $s=0,1$. This fact will be used in the induction proof.

Assume $J_{N-1}(x,s)\geq\hat{J}_{N-1}(x,s)$ for all $x\in[0,\bar{B}]$, $s\in\{0,1\}$, and also that $\hat{J}_{N-1}(x,s)$ is monotonic non-decreasing and concave in $x$ for $s=0,1$.

For the induction step, we write $J_N(x,s)$ as follows:
\begin{align}
J_N(x,s)&=r(x,s)+\mathbb{E}\big[J_{N-1}(x_2,s_2)\big]\nonumber\\
&\geq r(x,s)+\mathbb{E}\big[\hat{J}_{N-1}(x_2,s_2)\big],
\label{eq:theta_initial_expansion}
\end{align}
where $x_2$ and $s_2$ are given by \eqref{eq:x_state_dynamics} and \eqref{eq:s_state_dynamics}, with $x_1=x$ and $s_1=s$,
and the inequality is due to the induction assumption.
We further expand the second term:
\begin{align}
\mathbb{E}\big[\hat{J}_{N-1}(x_2,s_2)\big]
&=(1-p)\mathbb{E}\big[\hat{J}_{N-1}(x_2,0)\big|E^{(2)}\leq E_c\big]\nonumber\\*
&\quad\ 
+p\, \mathbb{E}\big[\hat{J}_{N-1}(x_2,1)\big|E^{(2)}>E_c\big].
\label{eq:theta_prev_expansion}
\end{align}
From \eqref{eq:x_state_dynamics}, we can succinctly write $x_2$ as follows:
\begin{equation}
x_2=\min\left\{(1-q)\tilde{x}+E^{(2)},\,
\tfrac{\bar{B}+(T-1)E^{(2)}}{T}\right\},
\end{equation}
where
\[
\tilde{x}=\begin{cases}
	\min(x,E_c), &\text{ if }s=0,\\
	E_c, &\text{ if }s=1.
\end{cases}
\]
We start with the first term of~\eqref{eq:theta_prev_expansion}.
Since $\hat{J}_{N-1}(z,0)$ is a non-decreasing function of $z$ by the induction assumption, it follows that
\begin{align*}
\hat{J}_{N-1}(x_2,0)
=\min\Big\{&
\hat{J}_{N-1}\big((1-q)\tilde{x}+E^{(2)},\,0\big),\nonumber\\*
&\qquad
\hat{J}_{N-1}\left(\tfrac{\bar{B}+(T-1)E^{(2)}}{T},\,0\right)\Big\}.
\end{align*}
Additionally, the functions 
\begin{align*}
f_1(z)&\triangleq\hat{J}_{N-1}\big((1-q)\tilde{x}+z,\,0\big),\\
f_2(z)&\triangleq\hat{J}_{N-1}\left(\tfrac{\bar{B}+(T-1)z}{T},\,0\right),
\end{align*}
are concave by the induction assumption that $\hat{J}_{N-1}(z,0)$ is concave.
It follows that the function $f_3(z)\triangleq\min\{f_1(z),f_2(z)\}$ is concave, as a minimum of two concave functions.
We conclude that $\hat{J}_{N-1}(x_2,0)=f_3(E^{(2)})$ is a concave function of $E^{(2)}$.

We make use of the following lemma, the proof of which can be found in~\cite{NearOptimal1}:
\begin{lemma}
\label{lemma:bernoulli_is_worst}
Let $f(z)$ be concave on the interval $[0,\bar{z}]$, and let $Z$ be a RV confined to the same interval, i.e. $0\leq Z\leq \bar{z}$. 
Let $\hat{Z}\in\{0,\bar{z}\}$ be a Bernoulli RV with $\Pr(\hat{Z}=\bar{z})=\mathbb{E}Z/\bar{z}$.
Then
\[
\mathbb{E}[f(Z)]\geq \mathbb{E}[f(\hat{Z})].
\]
\end{lemma}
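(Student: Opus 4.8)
The plan is to exploit the single defining feature of concavity: a concave function lies on or above any of its chords. Since $f$ is concave on $[0,\bar z]$, for every $z\in[0,\bar z]$ I would use the chord connecting the two endpoints $(0,f(0))$ and $(\bar z,f(\bar z))$ to write the pointwise lower bound
\[
f(z)\ \geq\ \tfrac{\bar z - z}{\bar z}\,f(0)+\tfrac{z}{\bar z}\,f(\bar z)\ =:\ \ell(z),
\]
valid for all $z$ in the interval because $\ell$ is precisely the secant line and concavity places the graph of $f$ weakly above it.

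The second step is to take expectations of this inequality over $Z$. Since the bound holds pointwise and $0\le Z\le\bar z$ almost surely, monotonicity of expectation gives $\mathbb{E}[f(Z)]\ge\mathbb{E}[\ell(Z)]$. The crucial observation is that $\ell$ is \emph{affine}, so $\mathbb{E}[\ell(Z)]=\ell(\mathbb{E}Z)$; equivalently, expanding term by term,
\[
\mathbb{E}[\ell(Z)]=\tfrac{\bar z-\mathbb{E}Z}{\bar z}\,f(0)+\tfrac{\mathbb{E}Z}{\bar z}\,f(\bar z).
\]

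Finally I would identify the right-hand side with $\mathbb{E}[f(\hat Z)]$. By the definition of $\hat Z$ as a Bernoulli variable on $\{0,\bar z\}$ with $\Pr(\hat Z=\bar z)=\mathbb{E}Z/\bar z$, its expected value is exactly $\big(1-\tfrac{\mathbb{E}Z}{\bar z}\big)f(0)+\tfrac{\mathbb{E}Z}{\bar z}f(\bar z)$, which matches the displayed expression for $\mathbb{E}[\ell(Z)]$. Chaining the two relations yields $\mathbb{E}[f(Z)]\ge\mathbb{E}[f(\hat Z)]$, as claimed.

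There is no genuine obstacle here: the argument is elementary and rests entirely on the chord characterization of concavity. The only point that warrants explicit verification is that the Bernoulli distribution and the affine envelope $\ell$ agree in expectation, and this holds by construction, since the mean-matching condition $\Pr(\hat Z=\bar z)=\mathbb{E}Z/\bar z$ is exactly what makes the Bernoulli expectation equal to the value of the chord evaluated at $\mathbb{E}Z$.
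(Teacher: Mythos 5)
Your proof is correct and is essentially the standard argument: the paper itself defers this lemma's proof to \cite{NearOptimal1}, where the same reasoning is used — bound $f$ from below by the chord through $(0,f(0))$ and $(\bar z,f(\bar z))$, take expectations using that the chord is affine, and recognize the result as $\mathbb{E}[f(\hat Z)]$ by the mean-matching choice of $\Pr(\hat Z=\bar z)$. Nothing is missing.
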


Applying Lemma~\ref{lemma:bernoulli_is_worst} to the first term of \eqref{eq:theta_prev_expansion}, or equivalently $f_3(E^{(2)})$:
\begin{align}
\mathbb{E}\big[\hat{J}_{N-1}(x_2,0)\big|E^{(2)}\leq E_c\big]
&=\mathbb{E}\big[f_3(E^{(2)})\big|E^{(2)}\leq E_c\big]\nonumber\\
&\overset{\text{(i)}}{\geq}\mathbb{E}\big[f_3(W)\big]\nonumber\\
&\overset{\text{(ii)}}{=}\mathbb{E}\big[f_3(\hat{E}^{(2)})\big|\hat{E}^{(2)}\leq E_c\big]\nonumber\\
&=\mathbb{E}\big[\hat{J}_{N-1}(\hat{x}_2,0)\big|\hat{E}^{(2)}\leq E_c\big],
\label{eq:theta_expansion_first_term}
\end{align}
where (i) is by Lemma~\ref{lemma:bernoulli_is_worst} and \eqref{eq:def_W}, and (ii) is due to the construction~\eqref{eq:def_Ehat}.

Next, observe that by construction of $\hat{E}^{(i)}$ \eqref{eq:def_Ehat}, we have for any function $f(z)$:
\[
\mathbb{E}\big[f(E^{(i)})\big|E^{(i)}>E_c\big]
=\mathbb{E}\big[f(\hat{E}^{(i)})\big|\hat{E}^{(i)}>E_c\big].
\]
This implies that the second term of~\eqref{eq:theta_prev_expansion} is equal to
\begin{equation}
\mathbb{E}\big[\hat{J}_{N-1}(x_2,1)\big|E^{(2)}>E_c\big]
=\mathbb{E}\big[\hat{J}_{N-1}(\hat{x}_2,1)\big|\hat{E}^{(2)}>E_c\big].
\label{eq:theta_expansion_second_term}
\end{equation}
Substituting~\eqref{eq:theta_expansion_first_term} and~\eqref{eq:theta_expansion_second_term} in~\eqref{eq:theta_prev_expansion} yields
\begin{align}
\mathbb{E}\big[\hat{J}_{N-1}(x_2,s_2)\big]
&\geq (1-p)\mathbb{E}\big[\hat{J}_{N-1}(\hat{x}_2,0)\big|\hat{E}^{(2)}\leq E_c\big]\nonumber\\*
&\quad\ +p\mathbb{E}\big[\hat{J}_{N-1}(\hat{x}_2,1)\big|\hat{E}^{(2)}>E_c\big]\nonumber\\
&=\mathbb{E}\big[\hat{J}_{N-1}(\hat{x}_2,\hat{s}_2)\big].
\end{align}
Substituting this in~\eqref{eq:theta_initial_expansion}:
\begin{align}
J_N(x,s)&\geq r(x,s)+\mathbb{E}\big[\hat{J}_{N-1}(\hat{x}_2,\hat{s}_2)\big]\nonumber\\
&=\hat{J}_N(x,s).
\end{align}

It remains to show $\hat{J}_N(x,s)$ is concave and non-decreasing in $x$ for $s=0,1$.
Starting with $s=1$, we have:
\begin{align}
\hat{J}_N(x,1)
&=r(x,1)+\mathbb{E}\big[\hat{J}_{N-1}(\hat{x}_2,\hat{s}_2)\big],
\label{eq:Jhat_large_energy}
\end{align}
where $\hat{x}_2=\min\left\{(1-q)E_c+\hat{E}^{(2)}, \tfrac{\bar{B}+(T-1)\hat{E}^{(2)}}{T}\right\}$ by \eqref{eq:x_state_dynamics}.
Note that the second term of \eqref{eq:Jhat_large_energy} does not depend on $x$. From~\eqref{eq:reward}, the function $r(x,1)$ is concave and non-decreasing in $x$. Therefore, $\hat{J}_N(x,1)$ is concave and non-decreasing in $x$.

Next, for $s=0$, writing $\hat{J}_N(x,0)$ explicitly and expanding the expectation as before:
\begin{align}
\hat{J}_N(x,0)
&=r(x,0)+(1-p)\mathbb{E}\big[\hat{J}_{N-1}(\hat{x}_2,0)\big|\hat{E}^{(2)}\leq E_c\big]\nonumber\\*
&\quad\ +p\,\mathbb{E}\big[\hat{J}_{N-1}(\hat{x}_2,1)\big|\hat{E}^{(2)}>E_c\big],
\end{align}
where now 
\[\hat{x}_2=\min\Big\{(1-q)\min(x,E_c)+\hat{E}^{(2)},\,\tfrac{\bar{B}+(T-1)\hat{E}^{(2)}}{T}\Big\}.\]
For any $s\in\{0,1\}$, by the induction assumption of monotonicity of $\hat{J}_{N-1}(x,s)$:
\begin{align*}
\hat{J}_{N-1}(\hat{x}_2,s)
=\min\Big\{&\hat{J}_{N-1}\big((1-q)\min(x,E_c)+\hat{E}^{(2)},s\big),\nonumber\\*
&\quad\hat{J}_{N-1}\left(\tfrac{\bar{B}+(T-1)\hat{E}^{(2)}}{T},s\right)\Big\}.
\end{align*}
For fixed $\hat{E}^{(2)}$, the function
\[h_1(x)\triangleq\hat{J}_{N-1}\big((1-q)\min(x,E_c)+\hat{E}^{(2)},s)\]
 is concave and non-decreasing in $x$,
whereas $h_2\triangleq\hat{J}_{N-1}\left(\frac{\bar{B}+(T-1)\hat{E}^{(2)}}{T},s\right)$ is simply a constant.
Hence, $h_3(x)\triangleq\min\{h_1(x),h_2\}$ is concave and non-decreasing in $x$, that is $\hat{J}_{N-1}(\hat{x}_2,s)$ is a concave non-decreasing function of $x$ (where the dependence on $x$ is implicit in $\hat{x}_2$).
Taking expectation over $\hat{E}^{(2)}$ preserves concavity and monotonicity, hence the functions
\[
\mathbb{E}\big[\hat{J}_{N-1}(\hat{x}_2,0)\big|\hat{E}^{(2)}\leq E_c\big],\ 
\mathbb{E}\big[\hat{J}_{N-1}(\hat{x}_2,1)\big|\hat{E}^{(2)}> E_c\big],
\]
are both concave and non-decreasing in $x$.
Finally, since $r(x,0)$ is concave and non-decreasing by~\eqref{eq:reward}, we conclude that $\hat{J}_{N}(x,0)$ is concave and non-decreasing as a non-negative weighted sum of concave non-decreasing functions.
This concludes the proof by induction.\qed

%---------------------------------------------------------------%
\section{Solution to Optimization Problem~\eqref{eq:opt_problem}}
%---------------------------------------------------------------%
\label{sec:kkt_verification}

We will show that~\eqref{eq:opt_solution} is the optimal solution to optimization problem \eqref{eq:opt_problem}
by writing the Lagrangian:
\begin{align}
\mathscr{L}&=\tfrac{p+(1-p)T}{T}C\big(\tfrac{p\gamma+(1-p)[T\tilde{\mu}-\beta_0+\beta]}{p+(1-p)T}\big)\nonumber\\*
&\quad+\tfrac{T-1}{T}\sum_{E_c<x\leq\bar{B}}P_E(x)C\big(\tfrac{Tx-\beta_1(x)+\beta-\gamma}{T-1}\big)+\tfrac{T-1}{T}p'C(\bar{B})\nonumber\\*
&\quad +\lambda_0(\beta-\gamma)+\sum_{E_c<x\leq\bar{B}}\lambda_1(x)\big(\bar{B}-\beta_1(x)\big)\nonumber\\*
&\quad+\nu\Big(\beta-(1-p)\beta_0-\sum_{E_c<x\leq\bar{B}}P_E(x)\beta_1(x)-p'\bar{B}\Big),
\end{align}
and verifying KKT conditions hold, namely the dual variables are non-negative: $\lambda_0\geq0$ and $\lambda_1(x)\geq0$ for every $E_c<x\leq\bar{B}$, and the gradient vanishes at the point $\gamma^\star$, $\beta^\star$, $\beta_0^\star$, $\{\beta_1^\star(x)\}_{E_c<x\leq\bar{B}}$.

To simplify calculations, we normalize by $\frac{1}{2}\log e$ without loss of generality, so that $C(x)=\ln(1+x)$.
We start by taking the derivative with respect to $\gamma$, and substituting the values given in~\eqref{eq:opt_solution}:
\begin{align}
\left.\frac{\partial\mathscr{L}}{\partial\gamma}\right|_{\gamma^\star,\beta^\star,\beta_0^\star,\beta_1^\star}
&=\tfrac{p+(1-p)T}{T}\tfrac{p}{p+(1-p)T+p\bar{B}+(1-p)T\tilde{\mu}}\nonumber\\*
&\quad-\tfrac{T-1}{T}\sum_{E_c<x\leq\bar{B}}\tfrac{P_E(x)}{T-1+Tx-\bar{B}}-\lambda_0.
\end{align}
Since the gradient must vanish, we get:
\begin{align}
\lambda_0&=\tfrac{p+(1-p)T}{T}\tfrac{p}{p+(1-p)T+p\bar{B}+(1-p)T\tilde{\mu}}\nonumber\\*
&\quad\ -\tfrac{T-1}{T}\sum_{E_c<x\leq\bar{B}}\tfrac{P_E(x)}{T-1+Tx-\bar{B}}\label{eq:lambda0}\\
&>\tfrac{p+(1-p)T}{T}\tfrac{p}{p+(1-p)T+p\bar{B}+(1-p)T\tilde{\mu}}
-\tfrac{T-1}{T}\tfrac{p-p'}{T-1+TE_c-\bar{B}},
\label{eq:lambda0_lower_bound}
\end{align}
where $p-p'=\Pr(E_c<E^{(i)}\leq\bar{B})$.
Observe that from~\eqref{eq:def_q_alt} and \eqref{eq:def_q} we have
\begin{align}
p\bar{B}+(1-p)T\tilde{\mu}
&=\big(p+(1-p)T\big)\cdot\mathbb{E}\big[\min(E^{(i)},\,E_c)\big]\nonumber\\*
&=\big(p+(1-p)T\big)\cdot\big((1-p)\tilde{\mu}+pE_c\big),\label{eq:lambda0_identity_lemma}
\end{align}
hence the first term in~\eqref{eq:lambda0_lower_bound} is equal to
$\tfrac{p}{T}\tfrac{1}{1+(1-p)\tilde{\mu}+pE_c}$.
Next, the second term can be written as
\begin{align}
\tfrac{p-p'}{T}\tfrac{T-1}{T-1+TE_c-\bar{B}}
&=\tfrac{p-p'}{T}\tfrac{1}{1+E_c-\tfrac{\bar{B}-E_c}{T-1}}\nonumber\\
&=\tfrac{p-p'}{T}\tfrac{1}{1+(1-p)\tilde{\mu}+pE_c},\label{eq:lambda0_second_identity}
\end{align}
where the second equality follows from~\eqref{eq:def_Ec}.
Substituting in~\eqref{eq:lambda0_lower_bound}, we conclude
\[
\lambda_0 > \tfrac{p'}{T}\tfrac{1}{1+(1-p)\tilde{\mu}+pE_c} > 0.
\]

We continue with the derivative with respect to $\beta$:
\begin{align}
\left.\frac{\partial\mathscr{L}}{\partial\beta}\right|_{\gamma^\star,\beta^\star,\beta^\star,\beta_1^\star}
&=\tfrac{1-p}{T}\tfrac{p+(1-p)T}{p+(1-p)T+p\bar{B}+(1-p)T\tilde{\mu}}\nonumber\\*
&\quad+\tfrac{T-1}{T}\sum_{E_c<x\leq\bar{B}}\tfrac{P_E(x)}{T-1+Tx-\bar{B}}
+\lambda_0+\nu\nonumber\\
&=\tfrac{1}{T}\tfrac{p+(1-p)T}{p+(1-p)T+p\bar{B}+(1-p)T\tilde{\mu}}+\nu,
\end{align}
where the second equality is due to~\eqref{eq:lambda0}.
It follows that
\begin{equation}
\nu=-\tfrac{1}{T}\tfrac{p+(1-p)T}{p+(1-p)T+p\bar{B}+(1-p)T\tilde{\mu}}.
\label{eq:nu}
\end{equation}

The derivative with respect to $\beta_0$ is given by:
\begin{align}
\left.\frac{\partial\mathscr{L}}{\partial\beta_0}\right|_{\gamma^\star,\beta^\star,\beta_0^\star,\beta_1^\star}
&=\tfrac{p+(1-p)T}{T}\tfrac{-(1-p)}{p+(1-p)T+p\bar{B}+(1-p)T\tilde{\mu}}\nonumber\\*
&\quad\ -(1-p)\nu,
\end{align}
which is equal to zero due to~\eqref{eq:nu}.

Finally, we differentiate with respect to $\beta_1(x)$ for $E_c<x\leq\bar{B}$:
\begin{align}
\left.\frac{\partial\mathscr{L}}{\partial\beta_1(x)}\right|_{\gamma^\star,\beta^\star,\beta_0^\star,\beta_1^\star}
&=-\tfrac{T-1}{T}\tfrac{P_E(x)}{T-1+Tx-\bar{B}}-\lambda_1(x)-\nu P_E(x).\nonumber
\end{align}
Equating to zero gives
\begin{align}
\lambda_1(x)&=P_E(x)\left(-\nu-\tfrac{1}{T}\tfrac{T-1}{T-1+Tx-\bar{B}}\right)\nonumber\\*
&\overset{\text{(i)}}{>}P_E(x)\left(\tfrac{1}{T}\tfrac{p+(1-p)T}{p+(1-p)T+p\bar{B}+(1-p)T\tilde{\mu}}
-\tfrac{1}{T}\tfrac{T-1}{T-1+TE_c-\bar{B}}\right)\nonumber\\
&\overset{\text{(ii)}}{=}P_E(x)\left(\tfrac{1}{T}\tfrac{1}{1+(1-p)\tilde{\mu}+pE_c}
-\tfrac{1}{T}\tfrac{T-1}{T-1+TE_c-\bar{B}}\right)\nonumber\\
&\overset{\text{(iii)}}{=}0,
\end{align}
where (i) is due to~\eqref{eq:nu} and because $x>E_c$,
(ii) is by~\eqref{eq:lambda0_identity_lemma},
and (iii) is by \eqref{eq:lambda0_second_identity}.\qed

\bibliographystyle{IEEEtran}
\bibliography{energy_harvesting}

\end{document}